\def\BState{\State\hskip-\ALG@thistlm}
\DeclareMathOperator*{\argmin}{arg\,min}
\newcounter{defcounter}
\newcommand{\inprod}[2]{\left\langle #1,#2 \right\rangle}
\newcommand{\bvec}[1]{\boldsymbol{#1}}
\newcommand{\real}{\mathbb{R}}
\newcommand{\support}{\texttt{supp}}
\newcommand{\proj}[1]{\mathbf{P}_{#1}}
\newcommand{\dualproj}[1]{\mathbf{P}_{#1}^\perp}
\newcommand{\opnorm}[1]{\left\|#1\right\|}
\newcommand{\abs}[1]{\left|#1\right|}
\newcommand{\diag}[1]{\texttt{diag}\left(#1\right)}
\newtheorem{lem}{Lemma}[section]
\newtheorem{thm}{Theorem}[section]
\newtheorem{define}{Definition}[section]
\title{Modified Hard Thresholding Pursuit with Regularization Assisted Support Identification}
\begin{document}
		\setlength\abovedisplayskip{0pt}
	\setlength\belowdisplayskip{0pt}
\author{Samrat Mukhopadhyay$^1$, {\sl Student Member, IEEE}, and Mrityunjoy
Chakraborty$^2$, {\sl Senior Member, IEEE}

% E.Mail :
% $^1$bijitbijit@gmail.com, $^2$vinaychakravarthi@gmail.com,
% $^3$mrityun@ece.iitkgp.ernet.in}

%\thanks{Manuscript received November, 2016.}

\thanks{The authors are with the department of Electronics and Electrical Communication
Engineering, Indian Institute of Technology, Kharagpur, INDIA(email : $^1$samratphysics@gmail.com, $^2$mrityun@ece.iitkgp.ernet.in).
}}
%\vspace{-5mm}
%\author{\IEEEauthorblockN{Samrat~Mukhopadhyay\IEEEauthorrefmark{1}, Siddhartha Satpathi\IEEEauthorrefmark{2} and Mrityunjoy Chakraborty\IEEEauthorrefmark{3}} \\
%  \IEEEauthorblockA{Department of Electronics and Electrical Communication Engineering,\\ Indian Institute of Technology, Kharagpur, INDIA\\
%    Email: \IEEEauthorrefmark{1}samratphysics@gmail.com,\IEEEauthorrefmark{2}sidd.piku@gmail.com,
%    \IEEEauthorrefmark{3}mrityun@ece.iitkgp.ernet.in}}%\vspace{-5mm}}
%\IEEEoverridecommandlockouts    
    %
\maketitle
\begin{abstract}
Hard thresholding pursuit (HTP) is a recently proposed iterative sparse recovery algorithm which is a result of combination of a support selection step from iterated hard thresholding (IHT) and an estimation step from the orthogonal matching pursuit (OMP). HTP has been seen to enjoy improved recovery guarantee along with enhanced speed of convergence. Much of the success of HTP can be attributed to its improved support selection capability due to the support selection step from IHT. In this paper, we propose a generalized HTP algorithm, called regularized HTP (RHTP), where the support selection step of HTP is replaced by a IHT-type support selection where the cost function is replaced by a regularized cost function, while the estimation step continues to use the least squares function. With decomposable regularizer, satisfying certain regularity conditions, the RHTP algorithm is shown to produce a sequence dynamically equivalent to a sequence evolving according to a HTP-like evolution, where the identification stage has a gradient premultiplied with a time-varying diagonal matrix. RHTP is also proven, both theoretically, and numerically, to enjoy faster convergence vis-a-vis HTP with both noiseless and noisy measurement vectors. 
\end{abstract}
\begin{IEEEkeywords}
Compressed sensing, Hard Thresholding Pursuit (HTP), Regularization.
\end{IEEEkeywords}
\section{Introduction}
\label{sec:introduction}
Compressed sensing is a recently developed paradigm which considers solving inverse problems by recovering sparse large unknwon vectors from small number of measurement vectors, linearly obtained form the unknown vector. Specifically, in compressed sensing, a system of linear equations, $\bvec{y}=\bvec{\Phi x}$ is provided, where $\bvec{y}$ is a $m$ dimensional measurement vector, $\bvec{\Phi}$ is a $m\times m$ dimensional measurement matrix and $\bvec{x}$ is the unknown $n$ dimensional vector. The goal is to estimate $\bvec{x}$. In compressed sensing, $m<<n$, which makes it an undetermined set of linear equations having infinitely many solutions. However, if $\bvec{x}$ is assumed to be $K$-sparse,i.e., if $K$ of the coordinates of $\bvec{x}$ are zero, such that $K<m/2$, then it is possible to recover $\bvec{x}$ exactly with $\mathcal{O}(K\ln(n/K))$ number of measurements~\cite{candes2006robust}. Candes and Tao~\cite{candes_decoding_2005} first showed that such a vector $\bvec{x}$ can be found exactly by solving the following $l_1-$optimization problem \begin{align*}
\ & \min_{\bvec{z}\in \real^n}\opnorm{\bvec{z}}_1\\
\ \mathrm{s.t.} & \ \bvec{y} = \bvec{\Phi z}
\end{align*}
With little effort, the above problem can be cast as a linear programming (LP) problem and can be solved using  standard LP routines. However, if the number of unknowns $n$ is large, this approach becomes computationally expensive~\cite{donoho2009message}. An alternate path to solution is provided by greedy algorithms, which leverage the knowledge of sparsity $K$ of $\bvec{x}$, and iteratively try to estimate $\bvec{x}$. Orthogonal matching pursuit (OMP)~\cite{pati1993orthogonal,tropp2004greed} and iterative hard thresholding (IHT)~\cite{daubechies2004iterative,herrity2006sparse} are two such classical algorithms with distinct estimation strategies. OMP iteratively searches for columns of the measurement matrix, one per each iteration, which are most suitable to correspond to the support of the unknown vector. Whereas, IHT minimizes a quadratic approximation of the least square cost function along with the $K$-sparsity constraint~\cite{blumensath2009iterative}. Both these algorithms are popular for their simplicity and ease of implementation. However, OMP has weaker theoretical recovery guarantee than IHT and it requires larger computational costs due to a projection step, which in turn ensures convergence in finite number of iterations. On the other hand, IHT is slower and may require infinite number of iterations while requiring much smaller computational cost.

Several other algorithms, like compressive sampling matching pursuit(CoSaMP)~\cite{needell2009cosamp}, subspace pursuit(SP)~\cite{dai2009subspace}, hard thresholding pursuit~\cite{foucart2011hard} were proposed to partially overcome the drawbacks of OMP and IHT to result in better convergence behaviour. Of these, HTP is particularly impressive as it is derived by taking the best of both OMP and IHT, and has been shown to have superior recovery performance to CoSaMP and SP~\cite{foucart2011hard} and requires much smaller, finite number of iterations for recovery as well~\cite{bouchot2016hard}. The key to the improved behaviour of HTP is attributed to the fact that at each iteration, it uses a two stage strategy, where in the first \emph{identification} stage it identifies a support with $K$ indices and then uses it in the following \emph{estimation} stage to estimate a $K$-sparse vector that minimizes the least squares cost function with the solution constrained to be on that support. Since the estimation stage ensures that the estimate is one of the $\binom{n}{K}$ many solutions corresponding to the $\binom{n}{K}$ possible supports of size $K$, it plays the principal role in ensuring that the algorithm converges in finite number of steps. However, the fast recovery performance of HTP is also highly dependent upon the quality of the support selected by the identification step, as a sequence of ``bad'' support selection might lead to delayed recovery.

In this paper we propose a generalization of HTP where  where we modify the identification stage by adding an extra gradient term of a regularization function to the gradient of the objective function. We call this method, the regularized HTP (RHTP). Note that if the regularizer function is takes as $0$, it becomes the HTP algorithm. This modification was partly motivated by observing the equivalency of the following two problems \begin{align*}
\ & \min_{\bvec{x}\in \real^n} \opnorm{\bvec{y} - \bvec{\Phi x}}_2^2\\
\ & \mathrm{s.t.}\opnorm{\bvec{x}}_0\le K,\ J(\bvec{x})\le R,
\end{align*} and \begin{align*}
\ & \min_{\bvec{x}\in \real^n} \opnorm{\bvec{y} - \bvec{\Phi x}}_2^2 + \sum_{j=1}^n \gamma_j g_j({x}_j)\\
\ & \mathrm{s.t.} \opnorm{\bvec{x}}_0\le K,
\end{align*} as the latter is formed from the former by adding a Lagrange multiplier to the cost function corresponding to the second constraint, where we have assumed that the function $J:\real^n\to \real$, is decomposable, i.e., $J(\bvec{x}) = \sum_{j=1}^n g_j(x_j)$, and that the Lagrange multipliers are $\{\gamma_1,\cdots,\gamma_n\}$. The presence of a regularizer generally corresponds to availability of prior information of the unknown vector to be estimated, and often improves the accuracy of the estimator. For example, the estimator produced by the elastic net (EN) regularizer by Zou and Hastie~\cite{zou2005regularization}, which uses a convex combination of the Ridge($l_2^2$)-regularizer and the $l_1$-regularizer, posses a certain ``grouping'' effect ~\cite[Theorem 1]{zou2005regularization}, where EN produces nearly equal estimates for a group of the nonzero entries in the support of the true vector if the corresponding columns of the measurement matrix are highly correlated. However, the $l_1$-minimizer alone has been seen to select only one of these columns and discard the others. Thus, using gradient corresponding to a regularized cost function in the identification step, it is hoped that the modified support selection step of the HTP algorithm will be useful in ``better'' convergence of the algorithm in some way. 

We provide a convergence analysis of the RHTP algorithm to find conditions on the measurement matrix as well the regularizer function and certain parameters of the algorithm, under which the algorithm converges. In the process of doing so it is revealed that if the regularizer function is \emph{decomposable} and if the decomposed functions have certain regularity properties, the sequence generated by the RHTP algorithm is \emph{topologically conjugate}~\cite{devaney2008introduction}, i.e. is dynamically equivalent, to a transformed sequence which has a HTP-like evolution with the gradient of the identification step premultiplied by a time varying diagonal matrix. Using this alternate HTP-like evoltuion, we show that if the common paramteres of HTP and RHTP are kept the same, the latter enjoys faster convergence speed compared to HTP. We also analyze the number of iterations it takes for RHTP to recover the true support with both noiseless and noisy measurements, and show that RHTP requires less number of iterations compared to HTP to do so. Finally, we corroborate our theoretical findings by performing several numerical experiments.
\section{Preliminaries}
\label{sec:preliminaries}
We first define the notion of equivalence between two dynamical systems, related by mappings, which will be helpful in the analysis later. For this purpose we use the notion of \emph{topological conjugacy} between two maps generating two dynamics~(See Devaney~\cite[\S 1.7]{devaney2008introduction}). We borrow the definition for topological conjugacy from Devaney~\cite{devaney2008introduction} and state it as below:
\begin{define}[Topological Conjugacy]
\label{def:topological-conjugacy}
Let $A,B$ be two topological spaces. Two mappings $f_1: A\to A$, and $f_2: B\to B$ are topologically conjugate if there is a homeomorphism $h:A\to B$ such that $h\circ f_1 = f_2 \circ h$.
\end{define} 
For two topological spaces $A, B$ equipped with topologies $\mathcal{T}_A,\ \mathcal{T}_B$, a homeomorphism $h:A\to B$ is a mapping such that $h$ is a bijection, and both $h$ and $h^{-1}$ are continuous. Thus intuitively topological conjugacy between two mappings $f_1,\ f_2$ implies that same dynamics can be represented in two different forms using the maps $f_1, f_2$. The importance of topological conjugacy is in the fact that if the maps $f_1,f_2$ are topologically conjugate, the properties of the sequence $\{f_1^n(\bvec{x})\}_{n\ge 0}$ can be studied by studying only the properties of the sequence $\{f_2^{n}(h(\bvec{x}))\}_{n\ge 0}$.

We use the following notations in the sequel:\begin{enumerate}
\item For any subset $S\subseteq \{1,\cdots,n\}$, $S^C$ denotes the complement of the subset $S$.
\item The superscript `$t$' is used to denote the transpose of a vector or matrix.
\item $\Omega_{n}^K$ denotes the collection of all subsets of size $K$ of the set $\{1,2,\cdots,\ n\}$.
%\item Let $\Sigma_N^K = \cup_{S\in \Omega_N^K} \Sigma_S$, where $\Sigma_S$ is the vector space of all the vectors with support $S$.
\item $f:\real^n\to \real$ is such that $f(\bvec{x}) = \frac{1}{2}\opnorm{\bvec{y} - \bvec{\Phi x}}_2^2$.
\item $\psi_j:\real\to \real,\ j=1,2,\cdots,n$ are functions defined as $\psi_j(x) = x - \gamma_j g_j'(x),\ \forall x\in \real$.
\item The mapping $\bvec{\Psi}:\real^n\to \real^n$ is defined as $\bvec{\Psi}(\bvec{x}) = [\psi_1(x_1)\cdots\ \psi_n(x_n)]^t$, and the corresponding inverse is defined as $\bvec{\Psi}^{-1}(\bvec{x}) = [\psi_1^{-1}(x_1)\cdots\ \psi_n^{-1}(x_n)]^t$.
\item $\forall \bvec{x}\in \real^n$, $\bvec{D\Psi}$ and $\bvec{D\Psi}^{-1}(\bvec{x})$ denote the Jacobians of the transformations $\bvec{\Psi}$ and $\bvec{\Psi}^{-1}$, respectively, at the point $\bvec{x}$, i.e., $\left[\bvec{D\Psi}(\bvec{x})\right]_{ij} = \frac{\partial [\bvec{\Psi}(\bvec{x})]_i}{\partial x_j}$ and $\left[\bvec{D\Psi}^{-1}(\bvec{x})\right]_{ij} = \frac{\partial [\bvec{\Psi}^{-1}(\bvec{x})]_i}{\partial x_j},\ \forall i,j = 1,2,\cdots,\ n$.
\item $\bvec{M}:\real^n\to \real^{n\times n}$ is defined as $\bvec{M}(\bvec{x}) = \diag{\psi_1'(\psi_1^{-1}(x_1)),\cdots,\ \psi_n'(\psi^{-1}_n(x_n))},$ and $\bvec{M}^{-1}(\bvec{x}) = \left(\bvec{M}(\bvec{x})\right)^{-1}$. Also, for any subset $T\subseteq \{1,2,\cdots,n\}$ such that $T=\{i_1,\cdots,\ i_l\}$, $\bvec{M}_T(\bvec{x}): = \diag{\psi_{i_1}'(\psi^{-1}_{i_1}(x_{i_1})),\cdots,\psi_{i_l}'(\psi^{-1}_{i_l}(x_{i_l}))}$, and $\bvec{M}^{-1}_T(\bvec{x}) = (\bvec{M}_T(\bvec{x}))^{-1}$.
\item $w:\real^n\to \real$ is defined such that $w(\bvec{x}) = f(\bvec{\Psi}^{-1}(\bvec{x})),\ \forall \bvec{x}\in \real^n$.
\item $\chi_1:\real^n\to \real^n$ is defined such that \begin{align*}
\chi_1(\bvec{x}) = \argmin_{\bvec{z}:\support(\bvec{z}) \subseteq S_1(\bvec{x})}f(\bvec{z}),
\end{align*} 
where $S_1:\real^n\to \Omega_n^K$ is defined such that $ S_1(\bvec{x}) = \support\left(H_K\left(\bvec{x} - \mu \nabla f(\bvec{x}) - \bvec{\Gamma} \nabla J(\bvec{x})\right)\right)$.
\item $\chi_2:\real^n\to \real^N$ is defined such that \begin{align*}
\chi_2(\bvec{x}) = \argmin_{\bvec{z}:\support(\bvec{z}) \subseteq S_2(\bvec{x})}w(\bvec{z}),
\end{align*}  where $S_2:\real^n\to \Omega_n^K$ is defined as $ S_2(\bvec{x}) = \support\left(H_K\left(\bvec{x} - \mu\bvec{M}(\bvec{x}) \nabla w(\bvec{x})\right)\right)$.
\item $\bvec{r}(\bvec{x})\in \real_+^n$ is the nondecreasing arrangement of a vector $\bvec{x}\in \real^n$, i.e., $r_1(\bvec{x})\ge r_2(\bvec{x})\ge \cdots\ge r_N(\bvec{x})\ge 0$, and there exists a permutation $\pi$ of $\{1,2,\cdots,\ n\}$ such that $r_{j}(\bvec{x}) = \abs{x_{\pi(j)}},\ \forall j=1,2,\cdots,\ n$.
\item For any positive integer $s$, $\rho_{3s} = \frac{\sqrt{2}\mu'_{3s}}{\sqrt{1 - (\mu'_{2s})^2}}$, and $\tau_s = \frac{\sqrt{2}\mu \sqrt{(1+\delta_{s})}}{\sqrt{1 - (\mu'_{s})^2}} + \frac{\sqrt{1+\delta_{s}}}{1-\delta_{s}}$ where $\mu'_{s} = \left(1 - \frac{\mu(1-\delta_{s})}{1-l}\right)$.
\end{enumerate}

%Furthermore, we require the function $f$ to be differentiable and to satisfy the Restricted Strong Smoothness/Strong Convexity (RSS/SC) property which is a standard property used in the analysis of greedy algorithms like HTP~\cite{yuan2018gradient,jain2014iterative,bahmani2013greedy}.
Furthermore, we require the matrix $\bvec{\Phi}$ to satisfy the restricted isometry property (RIP) which, informally, stands for the near unitariness of the columns of the matrix $\bvec{\Phi}$. Till its inception from the seminal paper by Candes and Tao~\cite{candes2006robust}, RIP has been used as a standard tool for the analysis of algorithms in the literature of compressed sensing, and has been used predominantly in the analysis of algorithms like HTP, CoSaMP and SP~\cite{foucart2011hard,needell2009cosamp,dai2009subspace}.
%
%\begin{define}[RSS/SC]
%\label{def:RSS/SC-property}
%For any integer $K>0$, the function $f$ is called $M_K$-smooth and $m_K$-strong convex if there exists $m_K, M_K>0$ such that $\forall \bvec{x},\ \bvec{z}, \mathrm{s.t.}\ \opnorm{\bvec{z} - \bvec{x}}_0\le K$, \begin{align}
%\frac{m_K}{2}\opnorm{\bvec{z} - \bvec{x}}_2^2 & \le f(\bvec{z}) - f(\bvec{x}) - \inprod{\nabla f(\bvec{x})}{\bvec{z } -\bvec{x}} \nonumber\\
%\label{eq:RSS/SC-property}
%\ & \le \frac{M_K}{2}\opnorm{\bvec{z}-\bvec{x}}_2^2.
%\end{align}
%\end{define}
\begin{define}[Restricted Isometry Property]
\label{def:RIP-property}
For any integer $K>0$, the matrix $\bvec{\Phi}$ is said to satisfy the restricted isometry property (RIP) of order $K$ with restricted isometry constant (RIC) $\delta_K\in [0,1)$, if $\forall \bvec{x}\in \real^n,\ \mathrm{s.t.}\ \opnorm{\bvec{x}}_0\le K$, \begin{align}
\label{eq:RSS/SC-property}
(1-\delta)\opnorm{\bvec{x}}_2^2 & \le \opnorm{\bvec{\Phi x}}_2^2\le (1+\delta)\opnorm{\bvec{x}}_2^2,
\end{align}
for all $\delta\ge \delta_K$.
\end{define}
We will also require the following lemmas in the analysis:
\begin{lem}[Wielandt~\cite{horn1990matrix}]
\label{lem:wielandt-theorem}
For any integer $L>0$, let $\bvec{B}\in \real^{L\times L}$ be a real positive definite matrix with maximum and minimum eigenvalues $\lambda_M$, and $\lambda_m>0$, respectively. Then, $\forall \bvec{x}, \bvec{y}\in \real^L$, such that $\inprod{\bvec{x}}{\bvec{y}}=0$, the following is satisfied:\begin{align}
\label{eq:wielandt-inequality}
\abs{\bvec{x}^t\bvec{By}}^2 & \le \left(\frac{\lambda_M - \lambda_m}{\lambda_M + \lambda_m}\right)^2 (\bvec{x}^t\bvec{Bx})(\bvec{y}^t\bvec{By}).
\end{align}
Moreover, there exists orthonormal pair of vectors $\bvec{x},\ \bvec{y}\in \real^L$ such that the the above inequality is satisfied with equality.
\end{lem}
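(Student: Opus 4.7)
The plan is to reduce the problem to a two-dimensional one by restricting the quadratic form to the subspace spanned by $\bvec{x}$ and $\bvec{y}$, and then to use an elementary inequality on $2\times 2$ matrices. Without loss of generality I may normalize $\opnorm{\bvec{x}}_2 = \opnorm{\bvec{y}}_2 = 1$; since $\inprod{\bvec{x}}{\bvec{y}} = 0$, the pair $\{\bvec{x},\bvec{y}\}$ is an orthonormal basis of the two-dimensional subspace $V=\spn{\bvec{x},\bvec{y}}$. Let $\bvec{A}\in \real^{2\times 2}$ be the compression of $\bvec{B}$ onto $V$ in this basis, i.e. $\bvec{A}=\begin{pmatrix} a & b\\ b & c\end{pmatrix}$ with $a=\bvec{x}^t\bvec{Bx}$, $c=\bvec{y}^t\bvec{By}$, $b=\bvec{x}^t\bvec{By}$, so that the claimed inequality becomes $b^2\le \bigl(\frac{\lambda_M-\lambda_m}{\lambda_M+\lambda_m}\bigr)^2 ac$.

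The key structural step is to observe that by the Cauchy interlacing principle (or, equivalently, by the variational characterization of eigenvalues), the eigenvalues $\mu_1\ge \mu_2$ of $\bvec{A}$ satisfy $\lambda_m\le \mu_2\le \mu_1\le \lambda_M$. From trace and determinant I then get $a+c=\mu_1+\mu_2$ and $ac-b^2=\mu_1\mu_2$. Rearranging,
\begin{equation*}
\frac{b^2}{ac} \;=\; 1-\frac{\mu_1\mu_2}{ac}.
\end{equation*}
The AM--GM inequality yields $ac\le \bigl(\frac{a+c}{2}\bigr)^2=\bigl(\frac{\mu_1+\mu_2}{2}\bigr)^2$, so
\begin{equation*}
\frac{b^2}{ac}\;\le\; 1-\frac{4\mu_1\mu_2}{(\mu_1+\mu_2)^2}\;=\;\left(\frac{\mu_1-\mu_2}{\mu_1+\mu_2}\right)^2.
\end{equation*}

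The final step is to bound the right-hand side by $\bigl(\frac{\lambda_M-\lambda_m}{\lambda_M+\lambda_m}\bigr)^2$. For this I note that the map $(u,v)\mapsto (u-v)/(u+v)$ is increasing in $u$ and decreasing in $v$ on $\{0<v\le u\}$, so its maximum over $\lambda_m\le \mu_2\le \mu_1\le \lambda_M$ is attained at $(\mu_1,\mu_2)=(\lambda_M,\lambda_m)$. Combining gives the desired inequality. For the sharpness claim I would exhibit an explicit equality case: take $\bvec{v}_M,\bvec{v}_m$ to be unit eigenvectors of $\bvec{B}$ associated with $\lambda_M,\lambda_m$ and set $\bvec{x}=(\bvec{v}_M+\bvec{v}_m)/\sqrt{2}$, $\bvec{y}=(\bvec{v}_M-\bvec{v}_m)/\sqrt{2}$; these are orthonormal, and a direct computation of $\bvec{x}^t\bvec{Bx},\bvec{y}^t\bvec{By},\bvec{x}^t\bvec{By}$ in terms of $\lambda_M,\lambda_m$ makes both inequalities (AM--GM and the monotonicity bound) tight simultaneously.

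I do not anticipate a real obstacle here: the inequality is classical, and the only mildly delicate point is the interlacing argument that forces the compressed eigenvalues into $[\lambda_m,\lambda_M]$, which is an immediate consequence of the Rayleigh quotient characterization applied to unit vectors in $V$. Everything else is algebraic manipulation of a $2\times 2$ determinant and trace.
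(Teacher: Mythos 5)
Your proof is correct. Note, however, that the paper does not actually prove this lemma at all: it is stated as a known result and attributed to Wielandt via the citation to Horn and Johnson, so there is no internal proof to compare against. Your argument is a clean, self-contained derivation of the classical inequality: the reduction to the $2\times 2$ compression $\bvec{A}$ on $\spn{\bvec{x},\bvec{y}}$, the trace/determinant identities $a+c=\mu_1+\mu_2$ and $ac-b^2=\mu_1\mu_2$, the AM--GM step bounding $ac$ by $\bigl(\tfrac{\mu_1+\mu_2}{2}\bigr)^2$, and the monotonicity of $(u,v)\mapsto (u-v)/(u+v)$ combined with the Rayleigh-quotient containment $\lambda_m\le\mu_2\le\mu_1\le\lambda_M$ all check out, and your choice $\bvec{x}=(\bvec{v}_M+\bvec{v}_m)/\sqrt{2}$, $\bvec{y}=(\bvec{v}_M-\bvec{v}_m)/\sqrt{2}$ does make both intermediate inequalities tight simultaneously (there $a=c=(\lambda_M+\lambda_m)/2$ and $b=(\lambda_M-\lambda_m)/2$). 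Two minor points worth making explicit: writing the compression as a symmetric $2\times 2$ matrix uses $\bvec{B}=\bvec{B}^t$, which is the standard convention for ``positive definite'' and is satisfied by the Gram matrices $\bvec{\Phi}_T^t\bvec{\Phi}_T$ to which the paper applies the lemma; and the equality claim implicitly requires $L\ge 2$ so that an orthonormal pair exists.
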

The following lemma will also be useful for our analysis. Similar result has also been obtained in Lemma~$3$ in Wen \emph{et al}~\cite{wen2017nearly} which used Lemma $2.1$ of Chang and Wu~\cite{chang2014improved} to prove their result. However, we will be using Lemma~\ref{lem:wielandt-theorem} due to Wielandt to prove the following lemma.
\begin{lem}
\label{lem:norm-of-projection-upper-bound}
If $S$ is a subset of $\mathcal{H}$, and $S\cap \Lambda=\emptyset$, and if $\bvec{y} = \bvec{\Phi}_\Lambda\bvec{x}_\Lambda$, and if the matrix $\bvec{\Phi}$ satisfies the RIP of order $\abs{S}+\abs{\Lambda}$ with RIC $\delta_{|S|+|\Lambda|}$, then, \begin{align}
\label{eq:norm-of-projection-upper-bound}
\opnorm{\proj{S}\bvec{y}}_2 \le \delta_{|S|+|\Lambda|}\opnorm{\bvec{y}}_2
\end{align}
\end{lem}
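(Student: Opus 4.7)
The plan is to apply Wielandt's inequality (Lemma~\ref{lem:wielandt-theorem}) to the restricted Gram matrix $\bvec{B}=\bvec{\Phi}_{S\cup\Lambda}^t\bvec{\Phi}_{S\cup\Lambda}$. By the RIP of order $|S|+|\Lambda|$, this matrix is positive definite with eigenvalues lying in $[1-\delta,1+\delta]$, where $\delta:=\delta_{|S|+|\Lambda|}$. Because $S\cap\Lambda=\emptyset$, any vector in $\real^{|S|+|\Lambda|}$ supported on the $S$-block is orthogonal to any vector supported on the $\Lambda$-block, so Wielandt's bound applies with constant $\left(\frac{(1+\delta)-(1-\delta)}{(1+\delta)+(1-\delta)}\right)^2=\delta^2$.

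Next I would write the projection out explicitly. Let $\bvec{v}:=\proj{S}\bvec{y}$; since $\bvec{v}$ lies in the column space of $\bvec{\Phi}_S$, there exists $\bvec{w}$ such that $\bvec{v}=\bvec{\Phi}_S\bvec{w}$. Using the defining property of the orthogonal projection and the assumption $\bvec{y}=\bvec{\Phi}_\Lambda\bvec{x}_\Lambda$,
\begin{align*}
\opnorm{\bvec{v}}_2^{2}
=\inprod{\bvec{v}}{\bvec{y}}
=\bvec{w}^t\bvec{\Phi}_S^t\bvec{\Phi}_\Lambda\bvec{x}_\Lambda.
\end{align*}
Form the vectors $\bvec{a},\bvec{b}\in\real^{|S|+|\Lambda|}$ whose $S$- and $\Lambda$-blocks contain $\bvec{w}$ and $\bvec{x}_\Lambda$ (and zero on the other block), respectively. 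Then $\inprod{\bvec{a}}{\bvec{b}}=0$, $\bvec{a}^t\bvec{B}\bvec{b}=\bvec{w}^t\bvec{\Phi}_S^t\bvec{\Phi}_\Lambda\bvec{x}_\Lambda$, $\bvec{a}^t\bvec{B}\bvec{a}=\opnorm{\bvec{\Phi}_S\bvec{w}}_2^{2}=\opnorm{\bvec{v}}_2^{2}$, and $\bvec{b}^t\bvec{B}\bvec{b}=\opnorm{\bvec{\Phi}_\Lambda\bvec{x}_\Lambda}_2^{2}=\opnorm{\bvec{y}}_2^{2}$.

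Plugging into Wielandt yields $\opnorm{\bvec{v}}_2^{4}\le \delta^{2}\,\opnorm{\bvec{v}}_2^{2}\,\opnorm{\bvec{y}}_2^{2}$, and cancelling $\opnorm{\bvec{v}}_2^{2}$ (the trivial case $\bvec{v}=\bvec{0}$ is immediate) gives exactly $\opnorm{\proj{S}\bvec{y}}_2\le \delta\,\opnorm{\bvec{y}}_2$. The only mildly delicate point is the book-keeping: one must make sure the orthogonality hypothesis of Wielandt is really satisfied, which is where the disjointness $S\cap\Lambda=\emptyset$ is used, and that the eigenvalue bounds of $\bvec{B}$ are taken with respect to the RIC of the combined order $|S|+|\Lambda|$ rather than either alone. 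Once those are set up correctly the rest is a one-line application of Lemma~\ref{lem:wielandt-theorem}.
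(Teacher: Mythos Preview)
Your proposal is correct and follows essentially the same approach as the paper: write $\opnorm{\proj{S}\bvec{y}}_2^2=\inprod{\proj{S}\bvec{y}}{\bvec{y}}$ as $\bvec{a}^t\bvec{\Phi}_{S\cup\Lambda}^t\bvec{\Phi}_{S\cup\Lambda}\bvec{b}$ for orthogonal vectors $\bvec{a},\bvec{b}$ supported on the disjoint blocks $S$ and $\Lambda$, then apply Wielandt's inequality with the RIP eigenvalue bounds to obtain the factor $\delta_{|S|+|\Lambda|}$ and cancel one copy of $\opnorm{\proj{S}\bvec{y}}_2$. The only differences are notational (your $\bvec{w},\bvec{a},\bvec{b}$ versus the paper's $\bvec{u}_S,\bvec{u}_T,\bvec{x}_T$) and that you keep the Wielandt bound in squared form before cancelling, which is equivalent.
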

\begin{proof}
The proof is provided in Appendix~\ref{sec:appendix-proof-of-lemma-norm-of-projection-upper-bound}.
\end{proof}
\begin{lem}
\label{lem:inner-product-of-projections-upper-bound}
If $S$ is subset of $\mathcal{H}$ such that $S\cap \Lambda=\emptyset$, and if $\bvec{y} = \bvec{\Phi}_\Lambda\bvec{x}_\Lambda$, and if the matrix $\bvec{\Phi}$ satisfies the RIP of order $\abs{S}+\abs{\Lambda}+1$ with RIC $\delta_{|S|+|\Lambda|+ 1}$, then, $\forall i\notin S\cup \Lambda$,\begin{align}
\label{eq:inner-product-of-projections-upper-bound}
\abs{\bvec{\phi}^t_i\dualproj{S}\bvec{y}} \le \delta_{\abs{S}+\abs{\Lambda}+1}\opnorm{\dualproj{S}\bvec{\phi}_i}_2\opnorm{\bvec{y}}_2
\end{align}
\end{lem}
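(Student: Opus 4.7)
The plan is to reduce the bound to a single application of Wielandt's inequality (Lemma~\ref{lem:wielandt-theorem}) applied to the Gram matrix $\bvec{B} = \bvec{\Phi}_T^t\bvec{\Phi}_T$ on the index set $T = S\cup \Lambda\cup\{i\}$. Because $S\cap\Lambda=\emptyset$ and $i\notin S\cup \Lambda$, we have $\abs{T} = \abs{S}+\abs{\Lambda}+1$, and the RIP hypothesis gives that the eigenvalues of $\bvec{B}$ lie in $[1-\delta_{\abs{T}},1+\delta_{\abs{T}}]$. The goal is to rewrite the scalar $\bvec{\phi}_i^t\dualproj{S}\bvec{y}$ as $\tilde{\bvec{w}}^t\bvec{B}\tilde{\bvec{v}}$ for vectors $\tilde{\bvec{w}},\tilde{\bvec{v}}\in\real^{\abs{T}}$ that are orthogonal in the standard inner product, so that Wielandt applies directly.

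To produce such a representation, first I would use $\dualproj{S}^2=\dualproj{S}$ to write $\bvec{\phi}_i^t\dualproj{S}\bvec{y} = (\dualproj{S}\bvec{\phi}_i)^t\bvec{y}$. Then I would exhibit explicit sparse pre-images under $\bvec{\Phi}$: take $\bvec{w}\in\real^n$ with $\bvec{w}_S = -(\bvec{\Phi}_S^t\bvec{\Phi}_S)^{-1}\bvec{\Phi}_S^t\bvec{\phi}_i$, $w_i=1$, and zeros elsewhere, so that $\bvec{\Phi}\bvec{w} = \dualproj{S}\bvec{\phi}_i$ (here $\bvec{\Phi}_S^t\bvec{\Phi}_S$ is invertible because $\delta_{\abs{S}}\le \delta_{\abs{T}}<1$); and take $\bvec{v}\in\real^n$ with $\bvec{v}_\Lambda = \bvec{x}_\Lambda$ and zero elsewhere, so that $\bvec{\Phi}\bvec{v} = \bvec{y}$. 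Then $\bvec{\phi}_i^t\dualproj{S}\bvec{y} = \bvec{w}^t\bvec{\Phi}^t\bvec{\Phi}\bvec{v}$. Crucially, $\support(\bvec{w})\subseteq S\cup\{i\}$ and $\support(\bvec{v})\subseteq \Lambda$ are disjoint and both contained in $T$.

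Letting $\tilde{\bvec{w}},\tilde{\bvec{v}}\in\real^{\abs{T}}$ be the restrictions of $\bvec{w},\bvec{v}$ to $T$, the disjoint-support property immediately gives $\tilde{\bvec{w}}^t\tilde{\bvec{v}}=0$, while $\bvec{w}^t\bvec{\Phi}^t\bvec{\Phi}\bvec{v}=\tilde{\bvec{w}}^t\bvec{B}\tilde{\bvec{v}}$, $\opnorm{\dualproj{S}\bvec{\phi}_i}_2^2=\tilde{\bvec{w}}^t\bvec{B}\tilde{\bvec{w}}$, and $\opnorm{\bvec{y}}_2^2=\tilde{\bvec{v}}^t\bvec{B}\tilde{\bvec{v}}$. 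Lemma~\ref{lem:wielandt-theorem} then yields
\[
\abs{\bvec{\phi}_i^t\dualproj{S}\bvec{y}}^2 \le \left(\frac{\lambda_M-\lambda_m}{\lambda_M+\lambda_m}\right)^2 \opnorm{\dualproj{S}\bvec{\phi}_i}_2^2\, \opnorm{\bvec{y}}_2^2,
\]
and since $(r-1)/(r+1)$ is monotone in $r=\lambda_M/\lambda_m$ and $\lambda_M/\lambda_m\le (1+\delta_{\abs{T}})/(1-\delta_{\abs{T}})$, the spectral ratio is bounded by $\delta_{\abs{T}}=\delta_{\abs{S}+\abs{\Lambda}+1}$. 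Taking square roots gives the claim.

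The main obstacle is the first step: a direct disjoint-support RIP bound on $\abs{\bvec{w}^t\bvec{\Phi}^t\bvec{\Phi}\bvec{v}}$ only yields a $\delta/(1-\delta)$-type factor (because one must pass from $\opnorm{\bvec{\Phi}\bvec{w}}_2$ to $\opnorm{\bvec{w}}_2$ and from $\opnorm{\bvec{x}_\Lambda}_2$ to $\opnorm{\bvec{y}}_2$ via RIP), which falls short of the clean $\delta$ bound claimed in the lemma. Identifying the correct sparse pre-images so that their restrictions to $T$ are \emph{orthogonal} in $\real^{\abs{T}}$, rather than merely having disjoint supports in $\real^n$, is what unlocks Wielandt and removes the $1/(1-\delta)$ inflation.
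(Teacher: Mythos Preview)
Your proposal is correct and follows essentially the same route as the paper: both construct the sparse pre-images $\bvec{w}$ (supported on $S\cup\{i\}$, with $\bvec{\Phi w}=\dualproj{S}\bvec{\phi}_i$) and $\bvec{v}$ (supported on $\Lambda$, with $\bvec{\Phi v}=\bvec{y}$), restrict to $T=S\cup\Lambda\cup\{i\}$, and apply Wielandt's inequality to the Gram matrix $\bvec{\Phi}_T^t\bvec{\Phi}_T$ using the disjoint-support orthogonality of $\tilde{\bvec{w}},\tilde{\bvec{v}}$. Your write-up is slightly more explicit about the monotonicity step bounding $(\lambda_M-\lambda_m)/(\lambda_M+\lambda_m)$ by $\delta_{\abs{T}}$, which the paper simply absorbs into ``along with the RIP.''
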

\begin{proof}
The proof is provided in Appendix~\ref{sec:appendix-proof-of-lemma-inner-product-of-projections-upper-bound}.
\end{proof}
\begin{lem}
\label{lem:homeomorphism-Psi}
The map $\bvec{\Psi}:\real^n \to \real^n$ is a homeomorphism.
\end{lem}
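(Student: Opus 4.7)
The plan is to exploit the product structure of $\bvec{\Psi}$: since $\bvec{\Psi}(\bvec{x})=[\psi_1(x_1),\dots,\psi_n(x_n)]^t$ acts componentwise, $\bvec{\Psi}$ is a homeomorphism of $\real^n$ (with the product topology) if and only if each $\psi_j:\real\to\real$ is a homeomorphism of $\real$. I would therefore reduce the problem to the one-dimensional statement and then simply paste the coordinates back together, noting that the inverse is then also componentwise, $[\bvec{\Psi}^{-1}(\bvec{x})]_j = \psi_j^{-1}(x_j)$.

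For each $\psi_j(x)=x-\gamma_j g_j'(x)$ I would verify the three defining properties of a homeomorphism in turn. Continuity is immediate from the assumed continuous differentiability of $g_j$. For injectivity I would invoke the regularity hypothesis on the decomposable regularizer, which (in its standard form) guarantees $\psi_j'(x)=1-\gamma_j g_j''(x)>0$ uniformly in $x$; this makes $\psi_j$ strictly increasing. Surjectivity then follows by combining strict monotonicity with the behaviour of $\psi_j$ at infinity: the growth/coercivity condition on $g_j'$ forces $\psi_j(x)\to\pm\infty$ as $x\to\pm\infty$, so by the intermediate value theorem $\psi_j(\real)=\real$. Hence $\psi_j$ is a continuous bijection.

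Finally, the continuity of $\psi_j^{-1}$ can be obtained in two equivalent ways: either by the classical real-analysis fact that a continuous strictly monotone bijection $\real\to\real$ has a continuous inverse, or by the inverse function theorem applied at every point (since $\psi_j'$ never vanishes, $\psi_j^{-1}$ is in fact $C^1$ with derivative $1/\psi_j'(\psi_j^{-1}(\cdot))$, which also justifies the later definition of $\bvec{M}$ and $\bvec{M}^{-1}$ used in the sequel).

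The only delicate step is surjectivity: one must rule out pathologies in which $\psi_j$ is bounded above or below in spite of being strictly monotone, and this is where the growth condition on $g_j'$ (rather than the mere sign condition on $\psi_j'$) is essential. Everything else is routine once each $\psi_j$ is known to be a strictly monotone $C^1$ diffeomorphism of $\real$, and lifting to $\bvec{\Psi}$ on $\real^n$ amounts to the general fact that a finite product of homeomorphisms is a homeomorphism in the product topology.
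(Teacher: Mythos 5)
Your argument follows essentially the same route as the paper's: reduce to the scalar maps $\psi_j$ via the componentwise structure, get continuity from $g_j\in\mathcal{C}^2$, and get strict monotonicity from $\psi_j'(x)=1-\gamma_j g_j''(x)>0$. The one place you diverge is the place that actually matters. The paper's proof simply asserts that strict monotonicity ``makes it bijective with a continuous inverse,'' which conflates injectivity with surjectivity; you correctly isolate surjectivity as the delicate step and note that it does \emph{not} follow from $\psi_j'>0$ alone. Under the paper's literal hypotheses ($g_j\in\mathcal{C}^2$, $g_j'(0)=0$, $g_j''(x)<1/\gamma_j$ pointwise), surjectivity can in fact fail: taking $\gamma_j g_j'(x)=x-\int_0^x e^{-t^2}\,dt$ gives $\gamma_j g_j''<1$ everywhere yet $\psi_j(x)=\int_0^x e^{-t^2}\,dt$, which is bounded. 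So your appeal to a ``growth/coercivity condition on $g_j'$'' is the right instinct, but be aware that no such condition is stated in the paper; to close the gap you should point to (or impose) a uniform bound $\sup_{x\in\real}\gamma_j g_j''(x)\le L<1$, which yields $\psi_j'(x)\ge 1-L>0$, hence $\abs{\psi_j(x)}\ge(1-L)\abs{x}$ and surjectivity by the intermediate value theorem. This is consistent with how the paper uses the constant $L$ elsewhere (e.g.\ the hypothesis $\mu(1+\delta_{2K})<(1-L)^2/(1-l)$ in Theorem~\ref{thm:convergence-zn} implicitly requires $L<1$), so your version is the more honest proof; the remaining coordinatewise lifting and the $\mathcal{C}^1$ regularity of $\psi_j^{-1}$ via the inverse function theorem are routine and match the paper's use of $\bvec{M}$ and $\bvec{M}^{-1}$.
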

\begin{proof}
The proof is provided in Appendix~\ref{sec:appendix-proof-lemma-homemorphism-Psi}.
\end{proof}
\begin{lem}
\label{lem:jacobians-of-psi-and-psi-inverse}
For any $\bvec{x}\in \real^n$, the Jacobians $\bvec{D\Psi}(\bvec{x})$ and $\bvec{D\Psi}^{-1}(\bvec{x})$ satisfy $\bvec{D\Psi}(\bvec{x}) = \bvec{M}(\bvec{\Psi}(\bvec{x}))$ and $\bvec{D\Psi}^{-1}(\bvec{x}) = \bvec{M}^{-1}(\bvec{x})$, respectively.
\end{lem}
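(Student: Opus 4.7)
The plan is to exploit the fact that $\bvec{\Psi}$ acts coordinate-wise: since the $i$-th component of $\bvec{\Psi}(\bvec{x})$ depends only on $x_i$, its Jacobian is automatically diagonal, and similarly for $\bvec{\Psi}^{-1}$. Once both Jacobians are reduced to diagonal matrices, matching them to $\bvec{M}(\bvec{\Psi}(\bvec{x}))$ and $\bvec{M}^{-1}(\bvec{x})$ becomes an algebraic bookkeeping exercise combined with one application of the inverse function theorem.

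First, I would compute $\bvec{D\Psi}(\bvec{x})$ directly from the definition $[\bvec{D\Psi}(\bvec{x})]_{ij} = \partial \psi_i(x_i)/\partial x_j$. Since $\psi_i$ is a function of $x_i$ alone, this partial derivative vanishes for $i\neq j$ and equals $\psi_i'(x_i)$ for $i=j$, so $\bvec{D\Psi}(\bvec{x}) = \diag{\psi_1'(x_1),\ldots,\psi_n'(x_n)}$. To identify this with $\bvec{M}(\bvec{\Psi}(\bvec{x}))$, I would just unfold the definition of $\bvec{M}$: its $j$-th diagonal entry evaluated at $\bvec{\Psi}(\bvec{x})$ is $\psi_j'(\psi_j^{-1}([\bvec{\Psi}(\bvec{x})]_j)) = \psi_j'(\psi_j^{-1}(\psi_j(x_j))) = \psi_j'(x_j)$, using that $\psi_j^{-1}\circ \psi_j$ is the identity on $\real$ (which is provided by Lemma~\ref{lem:homeomorphism-Psi} applied coordinate-wise).

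For $\bvec{D\Psi}^{-1}(\bvec{x})$, the same coordinate-wise argument reduces it to $\diag{(\psi_1^{-1})'(x_1),\ldots,(\psi_n^{-1})'(x_n)}$. Here I would invoke the scalar inverse function theorem, which is applicable because the regularity hypotheses on the $g_j$'s ensure $\psi_j$ is a $C^1$ bijection with nowhere-vanishing derivative (this is what underwrites Lemma~\ref{lem:homeomorphism-Psi}). The theorem gives $(\psi_j^{-1})'(x_j) = 1/\psi_j'(\psi_j^{-1}(x_j))$, so the diagonal entries of $\bvec{D\Psi}^{-1}(\bvec{x})$ are exactly the reciprocals of those of $\bvec{M}(\bvec{x})$, matching $\bvec{M}^{-1}(\bvec{x})$ as defined.

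There is essentially no hard step; the only thing to handle with a little care is justifying the differentiability of $\psi_j^{-1}$ and the non-vanishing of $\psi_j'$, both of which follow from the standing regularity assumptions on $g_j$ and $\gamma_j$ used to establish Lemma~\ref{lem:homeomorphism-Psi}. Once that is noted, the identities drop out of the definitions of $\bvec{\Psi}$, $\bvec{\Psi}^{-1}$, and $\bvec{M}$ with no further work.
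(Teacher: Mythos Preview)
Your proposal is correct and follows essentially the same approach as the paper: compute the Jacobian entries directly from the coordinate-wise definitions, observe they are diagonal with entries $\psi_i'(x_i)$ and $(\psi_i^{-1})'(x_i)=1/\psi_i'(\psi_i^{-1}(x_i))$, and then match these against the definitions of $\bvec{M}$ and $\bvec{M}^{-1}$. The paper's proof is simply a terser version of exactly this argument.
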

\begin{proof}
The proof is provided in Appendix~\ref{sec:appendix-proof-lemm-jacobians-of-psi-and-psi-inverse}.
\end{proof}
%\begin{lem}
%%
%\label{lem:relation-between-gradient-f-gradient-w}
%$\forall \bvec{x}\in \real^N$, 
%\begin{align}
%\label{eq:relation-between-gradient-f-gradient-w}
%\nabla f(\bvec{x}) = \bvec{M}(\bvec{\Psi}(\bvec{x}))\nabla w(\bvec{\Psi}(\bvec{x})).
%\end{align} 
%\end{lem}
%\begin{proof}
%The proof is provided in Appendix~\ref{sec:appendix-proof-lemma-relation-between-gradient-f-gradient-w }.
%\end{proof}
%
%
\section{Proposed algorithm}
\label{sec:algo-description}
The RHTP algorithm is described in ~\autoref{tab:mixed-function-htp}.
\begin{table}[tb!]
\caption{\textsc{Algorithm}: Mixed Function HTP (RHTP)}
\label{tab:mixed-function-htp}
%\hspace{-1cm}
\begin{tabular}{p{9cm}}
%\hspace{5mm}
\centering
\hrulefill
\begin{description}
\item[\textbf{Input:}]\ sparsity level $K$; Initial estimates $\bvec{x}^0$; step sizes $\mu>0$; a diagonal regularization matrix $\bvec{\Gamma}$ with positive diagonal elements; maximum number of iterations $k_{\mathrm{it}}$;
\end{description}
\begin{enumerate}
\item[\textbf{While}]$\ (k<k_{\mathrm{it}}):$ 
%\begin{description}
\item $\hat{\bvec{x}}^{k+1} = H_K\left(\bvec{x}^k + \mu \bvec{\Phi}^t(\bvec{y} - \bvec{\Phi x}^k) - \bvec{\Gamma} \nabla J(\bvec{x}^k)\right)$
\item $\Lambda^{k+1} = \support\left({\hat{\bvec{x}}^{k+1}}\right)$
\item $\bvec{x}^{k+1} = \argmin_{\bvec{z}:\support{\left(\bvec{z}\right)} \subseteq \Lambda^{k+1}} \opnorm{\bvec{y} - \bvec{\Phi x}^k}_2$
\item $k=k+1$
%\end{description}
\item[\noindent\textbf{End While}]
\end{enumerate}
\hrulefill
\end{tabular}
\end{table}
where the diagonal elements of the diagonal matrix $\bvec{\Gamma}$ are the constants $\gamma_1,\cdots,\gamma_n$, and the function $J(\bvec{x})$ is assumed to be decomposable , i.e., we assume that there are functions $g_j:\real^\to \real,\ j=1,\cdots,n$, such that $J(\bvec{x})=\sum_{j=1}^n g_j(x_j)$.We also assume that the functions $g_j:\real\to \real$ satisfy the following:\begin{enumerate}
\item $g_j\in \mathcal{C}^2$, i.e. $g_j$ is continuously double differentiable.
\item $g_j'(0) = 0$. 
\item $g_j''(x) < 1/\gamma_j,\ \forall x\in \real.$
\end{enumerate}

From the Table~\ref{tab:mixed-function-htp} it can be seen that the only difference between RHTP and HTP is that the former has a $-\bvec{\Gamma }\nabla J(\bvec{x}^k)$ inside the hard thresholding operator in the identification stage. This modification induces the effect of availability of some prior information about the unknown vector in terms of the regularizer $J$ and the factors $\{\gamma_j\}_{j=1}^n$, and thus help select a support which is closer to the true support of the unknown vector, which in turn accelerates the convergence of the algorithm.  
\section{Theoretical results}
\label{sec:theoretical-results}
\subsection{Convergence analysis}
\label{sec:convergence-analysis}
This section analyzes convergence of the RHTP algorithm. The principle strategy used here is to find a sequence $\{\bvec{z}^k\}_{k\ge 0}$ which is topologically conjugate to the dynamics of the sequence $\{\bvec{x}^k\}$, and thus can be analyzed instead of the sequence $\{\bvec{x}^k\}_{k\ge 0}$ to understand the properties of the convergence. 
\begin{thm}
\label{thm:equivalent-htp-dynamics}
If the matrix $\bvec{\Phi}$ satisfies RIP of order $K$, then the mappings $\chi_1$ and $\chi_2$ are topologically conjugate.
\end{thm}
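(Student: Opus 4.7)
The plan is to show that $\bvec{\Psi}$ itself serves as the required homeomorphism, i.e.\ I will verify that $\bvec{\Psi}\circ \chi_1 = \chi_2\circ \bvec{\Psi}$. That $\bvec{\Psi}$ is a homeomorphism is already guaranteed by Lemma~\ref{lem:homeomorphism-Psi}, so the conjugacy reduces to an identity between the two RHTP/HTP-style maps once we track the change of variables $\bvec{x}=\bvec{\Psi}(\bvec{u})$.

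First I would handle the identification stage and prove $S_1(\bvec{u}) = S_2(\bvec{\Psi}(\bvec{u}))$. By Lemma~\ref{lem:jacobians-of-psi-and-psi-inverse} and the chain rule, $\nabla w(\bvec{x}) = \bvec{M}^{-1}(\bvec{x})\nabla f(\bvec{\Psi}^{-1}(\bvec{x}))$ (the diagonal $\bvec{M}$ equals its transpose), hence $\bvec{M}(\bvec{x})\nabla w(\bvec{x}) = \nabla f(\bvec{\Psi}^{-1}(\bvec{x}))$. Since $J$ is decomposable with $\nabla J(\bvec{u}) = [g_j'(u_j)]_j$, the definition of $\bvec{\Psi}$ gives the crucial identity $\bvec{\Psi}(\bvec{u}) = \bvec{u} - \bvec{\Gamma}\nabla J(\bvec{u})$. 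Substituting $\bvec{x} = \bvec{\Psi}(\bvec{u})$ then yields
\begin{align*}
\bvec{x} - \mu \bvec{M}(\bvec{x})\nabla w(\bvec{x}) = \bvec{u} - \mu \nabla f(\bvec{u}) - \bvec{\Gamma}\nabla J(\bvec{u}),
\end{align*}
so applying $H_K$ and taking supports gives $S_2(\bvec{\Psi}(\bvec{u})) = S_1(\bvec{u})$.

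Next I would handle the estimation stage. The key observation is that $\bvec{\Psi}$ acts coordinatewise and, because $g_j'(0)=0$ forces $\psi_j(0)=0$, each $\psi_j$ (and $\psi_j^{-1}$) sends $0$ to $0$ bijectively. Therefore $\support(\bvec{z})=\support(\bvec{\Psi}^{-1}(\bvec{z}))$ for every $\bvec{z}$, and the constraint set $\{\bvec{z}:\support(\bvec{z})\subseteq T\}$ is invariant under both $\bvec{\Psi}$ and $\bvec{\Psi}^{-1}$. Combined with $w = f\circ \bvec{\Psi}^{-1}$, the change of variables $\bvec{z}' = \bvec{\Psi}^{-1}(\bvec{z})$ gives
\begin{align*}
\argmin_{\support(\bvec{z})\subseteq T} w(\bvec{z}) = \bvec{\Psi}\!\left(\argmin_{\support(\bvec{z}')\subseteq T} f(\bvec{z}')\right),
\end{align*}
for any support $T$. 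Uniqueness of both minimisers (so that these $\argmin$'s are well-defined single points) follows from the RIP of order $K$, which forces $\bvec{\Phi}_T$ to have full column rank for $|T|\le K$. Setting $T = S_1(\bvec{u}) = S_2(\bvec{\Psi}(\bvec{u}))$ from the previous step gives $\bvec{\Psi}(\chi_1(\bvec{u})) = \chi_2(\bvec{\Psi}(\bvec{u}))$, which is exactly the conjugacy identity.

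The only subtle point is the estimation-stage identity: one must recognise that $\bvec{\Psi}$ preserves supports (requiring $g_j'(0)=0$), so that the change of variables does not distort the sparse feasible set and the two constrained least-squares problems on the \emph{same} support $T$ genuinely correspond through $\bvec{\Psi}$. Everything else is either a direct substitution using Lemma~\ref{lem:jacobians-of-psi-and-psi-inverse} or an invocation of Lemma~\ref{lem:homeomorphism-Psi} to close the conjugacy via Definition~\ref{def:topological-conjugacy}.
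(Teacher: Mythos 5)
Your proposal is correct and follows essentially the same route as the paper: take $h=\bvec{\Psi}$, use the identity $\bvec{\Psi}(\bvec{u})=\bvec{u}-\bvec{\Gamma}\nabla J(\bvec{u})$ together with $\bvec{M}(\bvec{\Psi}(\bvec{u}))\nabla w(\bvec{\Psi}(\bvec{u}))=\nabla f(\bvec{u})$ to match the supports, and then use support preservation ($\psi_j(0)=0$) plus RIP-based uniqueness of the constrained least-squares minimizer to match the estimation stages. The only cosmetic difference is that you perform the change of variables directly inside the $\argmin$, whereas the paper compares the minimum values and then invokes uniqueness; the two arguments are equivalent.
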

\begin{proof}
Let us fix some $\bvec{x}\in \real^n$. We need to find a homeomorphism $h:\real^n\to \real^n$ such that $h(\chi_1(\bvec{x})) = \chi_2(h(\bvec{x}))$. We claim that $h$ can be chosen to be the mapping $\bvec{\Psi}$. 

First, that $\bvec{\Psi}$ is a homeomorphism, follows from Lemma~\ref{lem:homeomorphism-Psi}. We now show that $\bvec{\Psi}(\chi_1(\bvec{x})) = \chi_2(\bvec{\Psi}(\bvec{x}))$. First observe that, while obtaining $\chi_1(\bvec{x})$, in the intermediate step we calculate the support $S_1(\bvec{x})$, which can be written as \begin{align*}
S_1(\bvec{x}) & = \support\left(H_K\left(\bvec{x} - \mu \nabla f(\bvec{x}) - \bvec{\Gamma} \nabla J(\bvec{x})\right)\right)\\
\ & = \support\left(H_K\left(\bvec{\Psi}(\bvec{x}) - \mu \nabla f(\bvec{x})\right)\right)\\
\ & \stackrel{(a)}{=} \support\left(H_K\left(\bvec{\Psi}(\bvec{x}) - \mu \bvec{M}(\bvec{\Psi}(\bvec{x}))\nabla w(\bvec{\Psi}(\bvec{x}))\right)\right)\\
\ & = S_2(\bvec{\Psi}(\bvec{x})),
\end{align*} 
where the step $(a)$ follows from the observation that $\nabla f(\bvec{x}) = \bvec{M}(\bvec{\Psi}(\bvec{x}))\nabla w(\bvec{\Psi}(\bvec{x}))$. Thus, \begin{align*}
\chi_2(\bvec{\Psi}(\bvec{x})) & = \argmin_{\bvec{z}:\support(\bvec{z}) \subseteq S_2(\bvec{\Psi}(\bvec{x}))}w(\bvec{z})\\
\ & = \argmin_{\bvec{z}:\support(\bvec{z}) \subseteq S_1(\bvec{x})}f(\bvec{\Psi}^{-1}(\bvec{z}))
\end{align*}
Hence, \begin{align*}
f(\bvec{\Psi}^{-1}(\chi_2(\bvec{\Psi}(\bvec{x})))) & = \min_{\bvec{z}:\support(\bvec{z}) \subseteq S_1(\bvec{x})}f(\bvec{\Psi}^{-1}(\bvec{z}))\\
\ & = \min_{\bvec{u}:\support(\bvec{\Psi}(\bvec{u})) \subseteq S_1(\bvec{x})}f(\bvec{u}))\\
\ & \stackrel{(b)}{=} \min_{\bvec{u}:\support(\bvec{u}) \subseteq S_1(\bvec{x})}f(\bvec{u}))\\
\ & \stackrel{(c)}{=} f(\chi_1(\bvec{x})),
\end{align*} 
where step $(b)$ follows from the observation that $\support(\bvec{\Psi}(\bvec{u})) = \support(\bvec{u}) $, since $\psi_j(0) = 0$ for all $j=1,2,\cdots,n$, and step $(c)$ follows from the definition of $\chi_1$. Now, observe that, by definition, $\chi_1(\bvec{x})$ is $K$-sparse with support $S_1(\bvec{x})$ and $\bvec{\Psi}^{-1}(\chi_2(\bvec{\Psi}(\bvec{x})))$ is also $K$-sparse with support $S_2(\bvec{\Psi}(\bvec{x}))$, as $\chi_2(\bvec{\Psi}(\bvec{x}))$ is $K$-sparse, by definition, and $\bvec{\Psi}^{-1}(\chi_2(\bvec{\Psi}(\bvec{x})))$ and $\chi_2(\bvec{\Psi}(\bvec{x}))$ have the same support thanks to the relation $\psi_j(0) = 0,\ \forall j\in \{1,2,\cdots,n\}$. Also, we have proved that $S_1(\bvec{x}) = S_2(\bvec{\Psi}(\bvec{x}))$. Furthermore, as the matrix $\bvec{\Phi}$ satisfies RIP of order $K$, for any support $S$ of size $K$, the minimizer $\argmin_{\bvec{z}:\support(\bvec{z})\subseteq S}f(\bvec{z}) = \argmin_{\bvec{z}:\support(\bvec{z})\subseteq S} \frac{1}{2}\opnorm{\bvec{y} - \bvec{\Phi}_S\bvec{z}_S}_2^2$ is unique, as the matrix $\bvec{\Phi}_S^t\bvec{\Phi}_S$ has minimum eigenvalue lower bounded by $1-\delta_K>0$, and hence is invertible. Consequently, we have $\chi_1(\bvec{x})=\bvec{\Psi}^{-1}(\chi_2(\bvec{\Psi}(\bvec{x})))\implies \bvec{\Psi}(\chi_1(\bvec{x})) = \chi_2(\bvec{\Psi}(\bvec{x}))$. This establishes the topological conjugacy of $\chi_1$ with $\chi_2$.
\end{proof}
Theorem~\ref{thm:equivalent-htp-dynamics} gives us an alternate dynamics furnished by $\chi_2$ which we can analyze to understand the convergence behavior of the dynamics governed by $\chi_1$. In the sequel we will analyze the sequence $\{\bvec{z}^k\}_{k\ge 0}$ instead of the sequence $\{\bvec{x}^k\}_{k\ge 0}$ produced by the RHTP algorithm, where $\bvec{z}^k = \bvec{\Psi}(\bvec{x}^k)$.
To further proceed we denote by $\Lambda^k$ the support of $\bvec{x}^k,\ \forall n\ge 0$, and $T^k = \Lambda^{k+1}\setminus \Lambda^k,\ U^k = \Lambda^{k+1}\cap \Lambda^k$.
\begin{lem}
\label{lem:bounds-on-iterates-zn-xhatn}
If the matrix $\bvec{\Phi}$ satisfies RIP of order $2K+1$, then at any iteration $n(\ge 1)$, the iterates $\bvec{x}^k,\ \bvec{z}^k,$ and $\hat{\bvec{x}}^{k+1}$ are upper bounded in the following way:\begin{align}
\label{eq:xn-upper-bound}
\abs{x_i^k } & \le B_k\le B\ \forall i\in \Lambda^k, \\
\label{eq:zn-upper-bound}
\abs{z_i^k} & \le C_{n,i}B_k\le C_iB\ \forall i\in \Lambda^k,\\
\label{eq:xhatn-upper-bound}
\abs{\hat{x}_i^{k+1}} & \le E_{n,i}\le E_i,\ \forall i\in \Lambda^{k+1},
\end{align}
 where \begin{align*}
B_k = \left\{\begin{array}{ll}
\delta_{2K}B, & \mbox{if $\Lambda^k\cap \Lambda=\emptyset$, in noiseless scenario}\\
B, & \mbox{else},
\end{array}\right.
\end{align*}
and $B = \frac{\opnorm{\bvec{y}}_2}{\sqrt{1-\delta_K}}$.
\begin{align*}
C_{k,i} = \max_{u\in[-B_k,B_k]}(1-\gamma_i g_i''(u)),
\end{align*} 
and $C_i = \max_{u\in[-B,B]}(1-\gamma_i g_i''(u))$. \begin{align*}
E_{k,i} = \max\{B_kC_{k,i},\ D_{k+1}\},
\end{align*} where \begin{align*}
D_k = \left\{\begin{array}{ll}
\delta_{2K+1}D, & \mbox{if}\ \Lambda^{k}\cap \Lambda=\emptyset\ \mbox{in noiseless scenario}\\
D, & \mbox{else},
\end{array}\right.
\end{align*}
where $D = \mu\max_i\opnorm{\bvec{\phi}_i}_2\opnorm{\bvec{y}}_2$, and $E_i = \max\{BC_i,\ D\}.$
\end{lem}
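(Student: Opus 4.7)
The plan is to establish the three bounds sequentially: first control $\bvec{x}^k$ through its least-squares characterization, then transfer this to $\bvec{z}^k$ via the smoothness of the $\psi_i$, and finally combine both ingredients with the normal equations for the least-squares subproblem to bound $\hat{\bvec{x}}^{k+1}$. Lemmas~\ref{lem:norm-of-projection-upper-bound} and~\ref{lem:inner-product-of-projections-upper-bound} are exactly what sharpen the crude bounds $B$ and $D$ into $\delta_{2K}B$ and $\delta_{2K+1}D$ in the noiseless, non-intersecting-support regime.

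For $\abs{x_i^k}$, the estimation step of RHTP forces $\bvec{x}^k_{\Lambda^k}=\bvec{\Phi}_{\Lambda^k}^\dagger\bvec{y}$, so $\bvec{\Phi}_{\Lambda^k}\bvec{x}^k_{\Lambda^k}=\proj{\Lambda^k}\bvec{y}$. The RIP of order $K$ then gives $(1-\delta_K)\opnorm{\bvec{x}^k_{\Lambda^k}}_2^2\le \opnorm{\proj{\Lambda^k}\bvec{y}}_2^2\le \opnorm{\bvec{y}}_2^2$, yielding the universal bound $B$. In the noiseless setting with $\Lambda^k\cap\Lambda=\emptyset$, applying Lemma~\ref{lem:norm-of-projection-upper-bound} with $S=\Lambda^k$ to $\bvec{y}=\bvec{\Phi}_\Lambda\bvec{x}_\Lambda$ contracts $\opnorm{\proj{\Lambda^k}\bvec{y}}_2$ by $\delta_{2K}$, producing $B_k=\delta_{2K}B$. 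For $\abs{z_i^k}$, I would use $z_i^k=\psi_i(x_i^k)$, note that $\psi_i(0)=0$ follows from $g_i'(0)=0$, and apply the mean value theorem to find $\xi_i$ between $0$ and $x_i^k$ with $z_i^k=\psi_i'(\xi_i)x_i^k=(1-\gamma_ig_i''(\xi_i))x_i^k$; maximizing $\abs{\psi_i'}$ over $[-B_k,B_k]$ delivers $\abs{z_i^k}\le C_{k,i}\abs{x_i^k}\le C_{k,i}B_k$.

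For $\hat{x}_i^{k+1}$, the key algebraic identity is $x_j^k-\gamma_jg_j'(x_j^k)=\psi_j(x_j^k)=z_j^k$, which rewrites the identification step as $\hat{\bvec{x}}^{k+1}=H_K\!\left(\bvec{z}^k+\mu\bvec{\Phi}^t(\bvec{y}-\bvec{\Phi x}^k)\right)$. I would then split on whether $i\in \Lambda^{k+1}$ lies in $\Lambda^k$. If $i\in \Lambda^{k+1}\cap\Lambda^k$, the normal equations $\bvec{\Phi}_{\Lambda^k}^t(\bvec{y}-\bvec{\Phi x}^k)=\bvec{0}$ collapse the $i$-th entry to $\hat{x}_i^{k+1}=z_i^k$, which is already bounded by $C_{k,i}B_k$. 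If $i\in \Lambda^{k+1}\setminus\Lambda^k$, then $z_i^k=0$ and $\bvec{y}-\bvec{\Phi x}^k=\dualproj{\Lambda^k}\bvec{y}$, giving $\abs{\hat{x}_i^{k+1}}=\mu\abs{\bvec{\phi}_i^t\dualproj{\Lambda^k}\bvec{y}}\le \mu\opnorm{\bvec{\phi}_i}_2\opnorm{\bvec{y}}_2=D$ in general, with Lemma~\ref{lem:inner-product-of-projections-upper-bound} applied to $S=\Lambda^k$ supplying the tighter $\delta_{2K+1}D$ in the noiseless non-intersecting case. Taking the maximum of the two case-wise estimates produces exactly $E_{k,i}$.

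The main obstacle is the careful bookkeeping of which intersection event ($\Lambda^k\cap\Lambda=\emptyset$ versus $\Lambda^{k+1}\cap\Lambda=\emptyset$) powers which part of each bound, and verifying that every index $i\in \Lambda^{k+1}\setminus\Lambda^k$ feeding the tighter $D_{k+1}$ estimate actually satisfies the hypothesis $i\notin \Lambda^k\cup\Lambda$ required by Lemma~\ref{lem:inner-product-of-projections-upper-bound}. Once this indexing is pinned down, the remaining work is routine: RIP inequalities, orthogonality of the least-squares residual, and a coordinate-wise application of the mean value theorem.
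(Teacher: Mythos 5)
Your proposal is correct and follows essentially the same route as the paper's proof: the pseudoinverse/projection bound with Lemma~\ref{lem:norm-of-projection-upper-bound} for $\bvec{x}^k$, the mean value theorem applied to $\psi_i$ (the paper phrases it via the Jacobian $\bvec{D\Psi}$, which is the same coordinate-wise computation) for $\bvec{z}^k$, and the case split $\Lambda^{k+1}\cap\Lambda^k$ versus $\Lambda^{k+1}\setminus\Lambda^k$ combined with the normal equations and Lemma~\ref{lem:inner-product-of-projections-upper-bound} for $\hat{\bvec{x}}^{k+1}$. The indexing caveat you flag at the end is real but is present in the paper's own argument as well, so nothing further is needed.
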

\begin{proof}
The proof is supplied in Appendix~\ref{sec:appendix-proof-lemma-bounds-on-iterates-zn-xhatn}.
\end{proof}
Our first result on the convergence of the sequence $\{\bvec{z}^k\}_{k\ge 0}$ is stated as below:
\begin{thm}
\label{thm:convergence-zn}
Let the matrix $\bvec{\Phi}$ satisfy the RIP of order $2K$. Also, let the functions $\{g_i\}_{1\le i\le L}$ satisfy the following property: \begin{align*}
\mu (1+\delta_{2K}) < \frac{(1- L)^2}{1- l},
\end{align*}
where $l = \min_i\min_{u\in [\psi_i^{-1}(-E_i),\psi_i^{-1}(E_i)]}\gamma_ig_i''(u),\ L = \max_i\max_{u\in [\psi_i^{-1}(-E_i),\psi_i^{-1}(E_i)]}\gamma_ig_i''(u)$, with $E_i$ as defined in Lemma~\ref{lem:bounds-on-iterates-zn-xhatn}. Then the RHTP algorithm converges in finite number of iterations.
\end{thm}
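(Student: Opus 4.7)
The plan is to leverage Theorem~\ref{thm:equivalent-htp-dynamics} and analyze the topologically conjugate sequence $\bvec{z}^k = \bvec{\Psi}(\bvec{x}^k)$, which evolves under the HTP-like map $\chi_2$: an identification step producing $S_2(\bvec{z}^k)$, followed by a least-squares minimization of $w$ on that support. I would use $w$ itself as a monotone potential and show that $w(\bvec{z}^k)$ strictly decreases by a uniform positive amount whenever $\Lambda^{k+1}\neq \Lambda^k$. Combined with the pigeonhole fact that only $\binom{n}{K}$ distinct size-$K$ supports exist, this will force the supports to stabilize, after which the algorithm has reached a fixed point and terminated.

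The argument splits into an easy case and a main case. If $\Lambda^{k+1} = \Lambda^k$, then (by RIP of order $K$, making $\bvec{\Phi}_{\Lambda^k}^t\bvec{\Phi}_{\Lambda^k}$ invertible) the estimation step returns exactly $\bvec{x}^k$, and the algorithm has converged. For the main case $\Lambda^{k+1}\neq \Lambda^k$, write $T^k = \Lambda^{k+1}\setminus \Lambda^k$ and $U^k = \Lambda^{k+1}\cap \Lambda^k$, and split the drop into an identification part $w(\bvec{z}^k) - w(\hat{\bvec{z}}^{k+1})$, where $\hat{\bvec{z}}^{k+1} = \bvec{\Psi}(\hat{\bvec{x}}^{k+1})$, and an estimation part $w(\hat{\bvec{z}}^{k+1}) - w(\bvec{z}^{k+1})$.

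For the identification drop, I would construct a quadratic surrogate of $w$ at $\bvec{z}^k$ with metric determined by $\bvec{M}(\bvec{z}^k)$. By the uniform bounds of Lemma~\ref{lem:bounds-on-iterates-zn-xhatn} and the definition of $l, L$, the diagonal entries of $\bvec{M}(\bvec{z}^k)$ on the coordinates of interest lie in $[1-L, 1-l]$, so the surrogate is well-conditioned. Since $S_2(\bvec{z}^k)$ is chosen by hard thresholding, it is in fact the minimizer over $K$-sparse vectors of this surrogate restricted to the appropriate off-support coordinates. Pairing this with the $(1-\delta_{2K})$-restricted-strong-convexity of $w$ on $2K$-sparse vectors (which follows from the RIP together with the chain rule $\nabla f = \bvec{M}\nabla w$), and invoking Lemma~\ref{lem:wielandt-theorem} on the diagonal $\bvec{M}(\bvec{z}^k)$ restricted to $T^k\cup U^k$ to handle the cross term between $T^k$ and $U^k$, should yield an inequality of the form
\begin{align*}
w(\bvec{z}^k) - w(\hat{\bvec{z}}^{k+1}) \ge \alpha\, \opnorm{(\bvec{z}^k - \hat{\bvec{z}}^{k+1})_{T^k}}_2^2,
\end{align*}
where $\alpha>0$ precisely when $\mu(1+\delta_{2K}) < (1-L)^2/(1-l)$. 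The estimation step then contributes an additional nonnegative drop via restricted strong convexity on $S_2(\bvec{z}^k)$.

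To convert the drop into finite termination, I would lower-bound $\opnorm{(\bvec{z}^k - \hat{\bvec{z}}^{k+1})_{T^k}}_2$ by a fixed positive constant whenever $T^k\neq\emptyset$: entries selected into $T^k$ must exceed in magnitude the $K$-th largest coordinate of $\bvec{z}^k - \mu\bvec{M}(\bvec{z}^k)\nabla w(\bvec{z}^k)$ outside $\Lambda^k$, and the uniform upper bounds on $|z_i^k|$ and $|\hat{x}_i^{k+1}|$ in Lemma~\ref{lem:bounds-on-iterates-zn-xhatn} prevent this margin from collapsing as $k$ grows. Hence $w(\bvec{z}^k) - w(\bvec{z}^{k+1}) \ge \eta > 0$ for a $k$-independent $\eta$ whenever the support changes, and since $\{w(\bvec{z}^k)\}$ is bounded below, only finitely many such changes are possible. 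The main obstacle is the careful bookkeeping of constants through the Wielandt and RIP inequalities so that $\alpha$ is indeed strictly positive under the stated hypothesis; this is precisely where the algebraic condition $\mu(1+\delta_{2K})<(1-L)^2/(1-l)$ becomes unavoidable, as it is the sharp requirement for the diagonally preconditioned surrogate to majorize $w$ on the relevant $2K$-sparse set.
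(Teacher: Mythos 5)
Your overall strategy --- pass to the conjugate sequence $\bvec{z}^k=\bvec{\Psi}(\bvec{x}^k)$, use $w$ as a monotone potential, and extract the condition $\mu(1+\delta_{2K})<(1-L)^2/(1-l)$ from balancing a smoothness-type constant against a curvature-type constant --- is exactly the paper's, and the splitting into an identification drop plus a nonnegative estimation drop matches the paper's decomposition $w(\hat{\bvec{x}}^{k+1})-w(\bvec{z}^k)=I_1+I_2$ followed by $w(\bvec{z}^{k+1})\le w(\hat{\bvec{x}}^{k+1})$. However, your identification step contains a flawed claim: the hard-thresholding output $H_K\left(\bvec{z}^k-\mu\bvec{M}(\bvec{z}^k)\nabla w(\bvec{z}^k)\right)$ is \emph{not} the minimizer over $K$-sparse vectors of a quadratic surrogate whose metric is determined by $\bvec{M}(\bvec{z}^k)$. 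Completing the square in such a weighted surrogate shows that its $K$-sparse minimizer keeps the $K$ coordinates with largest values of $M_{ii}^{-1}\left(z_i^k-\mu M_{ii}\partial_i w(\bvec{z}^k)\right)^2$, i.e.\ a \emph{weighted} sort, whereas the algorithm sorts unweighted magnitudes. The paper avoids this by using only the plain Euclidean optimality of $H_K$ (comparing $\hat{\bvec{x}}^{k+1}$ against $\bvec{z}^k$ as a competitor) and pushing all the $\bvec{M}$-dependence into worst-case scalar bounds $F_k$ and $G_k$ on the diagonal entries along the segment joining $\bvec{z}^k$ and $\hat{\bvec{x}}^{k+1}$; you would need to do something similar, since the MM framing as stated does not go through. (A smaller point: the paper works with $w(\hat{\bvec{x}}^{k+1})$ directly, since $\hat{\bvec{x}}^{k+1}=H_K(\bvec{z}^k-\mu\nabla f(\bvec{x}^k))$ already lives in the $z$-domain; your extra composition with $\bvec{\Psi}$ is harmless for the estimation comparison but changes which intermediate point the identification drop must be proved for.)

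The more serious gap is in the termination argument. You propose to lower-bound $\opnorm{(\bvec{z}^k-\hat{\bvec{z}}^{k+1})_{T^k}}_2$ by a $k$-independent constant $\eta>0$ whenever $T^k\neq\emptyset$, justified by the \emph{upper} bounds of Lemma~\ref{lem:bounds-on-iterates-zn-xhatn}. Upper bounds on $\abs{z_i^k}$ and $\abs{\hat{x}_i^{k+1}}$ give no lower bound on the margin by which a newly selected coordinate beats a discarded one; a priori that margin can be arbitrarily small, so this step fails as stated. The uniform drop is also unnecessary: since $\bvec{x}^k$ is always the least-squares solution on one of the finitely many size-$K$ supports, the sequence $\{\bvec{z}^k\}$ ranges over a finite set, so a non-increasing $\{w(\bvec{z}^k)\}$ that decreases strictly whenever $\hat{\bvec{x}}^{k+1}\neq\bvec{z}^k$ must become constant (no state can recur with a strict decrease in between), at which point $\Lambda^{k+1}=\Lambda^k$ and the iterates are fixed. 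This finite-state pigeonhole is precisely how the paper concludes; you mention the $\binom{n}{K}$ count at the outset but then route the termination through the unjustified uniform-$\eta$ bound instead of using it.
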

\begin{proof}
Writing $\bvec{u}^k(\theta) = \bvec{z}^k + \theta (\hat{\bvec{x}}^{k+1} - \bvec{z}^k)$, we get \begin{align*}
\lefteqn{w(\hat{\bvec{x}}^{k+1}) -  w(\bvec{z}^k)} &  &\\
\ & = \int_0^1\inprod{\nabla w(\bvec{z}^{k} + \theta(\hat{\bvec{x}}^{k+1} - \bvec{z}^k))}{\hat{\bvec{x}}^{k+1} - \bvec{z}^k} d\theta\\
\ & = \int_0^1\inprod{\bvec{M}^{-1}(\bvec{u}^k(\theta))\nabla f(\bvec{\Psi}^{-1}(\bvec{u}^k(\theta)))}{\hat{\bvec{x}}^{k+1} - \bvec{z}^k} d\theta\\
\ & = \int_0^1\inprod{\bvec{M}^{-1}(\bvec{u}^k(\theta))\nabla f(\bvec{\Psi}^{-1}(\bvec{u}^k(\theta)))}{\hat{\bvec{x}}^{k+1} - \bvec{z}^k} d\theta\\
\ & = \int_0^1\inprod{\bvec{M}^{-1}(\bvec{u}^k(\theta))\left(\nabla f(\bvec{\Psi}^{-1}(\bvec{u}^k(\theta))) - \nabla f(\bvec{x}^k)\right)}{\hat{\bvec{x}}^{k+1} - \bvec{z}^k} d\theta\\
\ & +  \int_0^1\inprod{\bvec{M}^{-1}(\bvec{u}^k(\theta))\nabla f(\bvec{x}^k)}{\hat{\bvec{x}}^{k+1} - \bvec{z}^k} d\theta\\
\ & = I_1 + I_2.
\end{align*}
We will now find upper bounds on $I_1,\ I_2$. 

To find an upper bound of $I_1$, first note that $\nabla f(\bvec{\Psi}^{-1}(\bvec{u}^k(\theta))) - \nabla f(\bvec{x}^k) = $ $\bvec{\Phi}^t\bvec{\Phi}(\bvec{\Psi}^{-1}(\bvec{u}^k(\theta)) - \bvec{x}^k)$. Now, we expess $\bvec{\Psi}^{-1}(\bvec{u}^k(\theta)) - \bvec{x}^k $ as \begin{align*}
\lefteqn{\bvec{\Psi}^{-1}(\bvec{u}^k(\theta)) - \bvec{x}^k} & &\\
\ & = \bvec{\Psi}^{-1}(\bvec{u}^k(\theta)) - \bvec{\Psi}^{-1}(\bvec{z}^k)\\
\ & = \int_0^1 \frac{d\bvec{\Psi}^{-1}(\bvec{z}^k + \theta_1(\bvec{u}^k(\theta) - \bvec{z}^k))}{d\theta_1}d\theta_1\\
\ & = \int_0^1 \bvec{D\Psi}^{-1}(\bvec{z}^k + \theta_1 (\bvec{u}^k(\theta) - \bvec{z}^k))(\bvec{u}^k(\theta) - \bvec{z}^k) d\theta_1\\
\ & \stackrel{(d)}{=}\theta\int_0^1 \bvec{M}^{-1}(\bvec{u}^k(\theta\theta_1))(\hat{\bvec{x}}^{k+1} - \bvec{z}^k) d\theta_1
\end{align*}
where step $(d)$ uses Lemma~\ref{lem:jacobians-of-psi-and-psi-inverse} and the identity $\bvec{u}^k(\theta) - \bvec{z}^k = \theta(\hat{\bvec{x}}^{k+1} - \bvec{z}^k)$ along with the definition of $\bvec{u}^k(\theta)$. Consequently, $I_1$ can be expressed as: $I_1 = (\hat{\bvec{x}}^{k+1} - \bvec{z}^k)_{S^k}^t \bvec{A}(\hat{\bvec{x}}^{k+1} - \bvec{z}^k)_{S^k}$, where $S^k = \Lambda^{k+1}\cup \Lambda^k$, and $\bvec{A} = \int_0^1 \int_0^1 \theta \bvec{M}^{-1}_{S^k}(\bvec{u}^k(\theta\theta_1))\bvec{\Phi}_{S^k}^t\bvec{\Phi}_{S^k}\bvec{M}_{S^k}^{-1}(\bvec{u}^k(\theta))d\theta_1 d\theta$. Thus, $I_1\le \lambda_{\max}\left(\bvec{A}\right)\opnorm{\hat{\bvec{x}}_{k+1} - \bvec{z}^k}_2^2$. Since the maximum eigenvalue is a convex function over the set of matrices, we have, due to Jensen's inequality, $ \lambda_{\max}\left(\bvec{A}\right) = \lambda_{\max}\left(\int_0^1 \int_0^1 \theta \bvec{M}^{-1}_{S^k}(\bvec{u}^k(\theta\theta_1))\bvec{\Phi}_{S^k}^t\bvec{\Phi}_{S^k}\bvec{M}_{S^k}^{-1}(\bvec{u}^k(\theta))d\theta_1 d\theta\right)\le \int_0^1 \int_0^1\lambda_{\max}\left(\theta \bvec{M}^{-1}_{S^k}(\bvec{u}^k(\theta\theta_1))\bvec{\Phi}_{S^k}^t\bvec{\Phi}_{S^k}\bvec{M}_{S^k}^{-1}(\bvec{u}^k(\theta))\right)d\theta_1 d\theta =  \int_0^1 \int_0^1\theta\lambda_{\max}\left(\bvec{M}_{S^k}^{-1}(\bvec{u}^k(\theta))\bvec{M}^{-1}_{S^k}(\bvec{u}^k(\theta\theta_1))\bvec{\Phi}_{S^k}^t\bvec{\Phi}_{S^k}\right)d\theta_1 d\theta.$ Now, for given $\theta, \theta_1$, $\lambda_{\max}\left(\bvec{M}_{S^k}^{-1}(\bvec{u}^k(\theta))\bvec{M}^{-1}_{S^k}(\bvec{u}^k(\theta\theta_1))\bvec{\Phi}_{S^k}^t\bvec{\Phi}_{S^k}\right) \le \lambda_{\max}\left(\bvec{M}_{S^k}^{-1}(\bvec{u}^k(\theta))\right)\lambda_{\max}\left(\bvec{M}_{S^k}^{-1}(\bvec{u}^k(\theta\theta_1))\right)\lambda_{\max}\left(\bvec{\Phi}_{S^k}^t\bvec{\Phi}_{S^k})\right)\le \max_{i\in S^k} M_{ii}^{-1}(\bvec{u}^k(\theta))\max_{i\in S^k} M_{ii}^{-1}(\bvec{u}^k(\theta\theta_1))(1+\delta_{2K})$. Now, to find an upper bound on $M_{ii}^{-1}(\bvec{u}^k(\theta))$, for any $\theta\in [0,1]$, we note that $M_{ii}^{-1}(\bvec{u}^k(\theta)) = \frac{1}{1 - \gamma_i g_i''(\psi^{-1}(u_i^k(\theta)))}$. Since, $u_i^k(\theta) = z_i^k+\theta(\hat{x}^{k+1}_i - z_i^k)$, using Lemma~\ref{lem:bounds-on-iterates-zn-xhatn}, we obtain, $\abs{u_i^k(\theta)}\le (1-\theta)C_{k,i}B_k + \theta E_{k,i}\le E_{k,i}$, which implies that $M_{ii}^{-1}(\bvec{u}^k(\theta)) = \frac{1}{1 - \gamma_i g_i''(\psi_i^{-1}(u_i^k(\theta)))}\le \frac{1}{1 - \gamma_i\max_{u \in [\psi_i^{-1}(E_{k,i}),\ \psi_i^{-1}(E_{k,i})]}g_i''(u)}=:F_{k,i}$. Thus, we obtain that $\forall \theta,\theta_1\in [0,1],\ \lambda_{\max}\left(\bvec{M}_{S^k}^{-1}(\bvec{u}^k(\theta))\bvec{M}^{-1}_{S^k}(\bvec{u}^k(\theta\theta_1))\bvec{\Phi}_{S^k}^t\bvec{\Phi}_{S^k}\right) \le F_k^2(1+\delta_{2K})$, where $F_k = \max_{i}F_{k,i}$. hence, $I_1\le \frac{F_k^2(1+\delta_{2K})\opnorm{\hat{\bvec{x}}^{k+1} - \bvec{z}^k}_2^2}{2}.$

To find an upper bound on $I_2$, we proceed as below:\begin{align*}
\lefteqn{\int_0^1\inprod{\bvec{M}^{-1}(\bvec{u}^k(\theta))\nabla f(\bvec{x}^k)}{\hat{\bvec{x}}^{k+1} - \bvec{z}^k} d\theta} & &\\
\ & \stackrel{(e)}{=}\int_0^1\inprod{\bvec{M}^{-1}(\bvec{u}^k(\theta))_{T^k}\nabla_{T^k} f(\bvec{x}^k)}{(\hat{\bvec{x}}^{k+1} - \bvec{z}^k)_{T^k}} d\theta,
\end{align*}
where step $(e)$ is due to the fact that $\nabla_{\Lambda^{k}}f(\bvec{x}^k) = \bvec{0}^{\Lambda^k}$. To further proceed, we utilize the structure of the vector $\hat{\bvec{x}}^{k+1}$, i.e., that $\hat{\bvec{x}}^{k+1}_{T^k} = -\mu \nabla_{T^k}f(\bvec{x}^k)$, and $\hat{\bvec{x}}^{k+1}_{U^k} = \bvec{z}^k_{U^k}$. Consequently, $\nabla_{T^k}f(\bvec{x}^k) = -\frac{1}{\mu}(\hat{\bvec{x}}^{k+1}_{T^k} - \bvec{x}^k_{T^k})$. Thus, we obtain, \begin{align*}
\lefteqn{\int_0^1\inprod{\bvec{M}^{-1}(\bvec{u}^k(\theta))\nabla f(\bvec{x}^k)}{\hat{\bvec{x}}^{k+1} - \bvec{z}^k} d\theta} & &\\
\ & = -\frac{1}{\mu}\int_0^1 \opnorm{\bvec{M}^{-1/2}(\bvec{u}^k(\theta))(\hat{\bvec{x}}^{k+1} - \bvec{z}^k)_{T^k}}_2^2 d\theta\\
\ & \le -\frac{\int_0^1 \min_{i\in T^k}M_{ii}^{-1}(u_i^k(\theta)) d\theta}{\mu}\opnorm{(\hat{\bvec{x}}^{k+1} - \bvec{z}^k)_{T^k}}_2^2.
\end{align*}
 Now, by Lemma~\ref{lem:bounds-on-iterates-zn-xhatn}, $\forall i\in T^k,\ \abs{x_i^{k+1}}\le E_{k,i}$. Consequently, $\forall i \in T^k$, $\forall \theta\in [0,1],\ M_{ii}^{-1}(u_i^k(\theta)) = \frac{1}{1-\gamma_i g_i''(\psi^{-1}(\theta\hat{x}^{k+1}_i))}\ge \frac{1}{1-\gamma\min_{u\in [\psi_i^{-1}(-E_{k,i}), \psi_i^{-1}(E_{k,i})]} g_i''(u)}=:G_{k,i}$.  Thus, \begin{align*}
\lefteqn{\int_0^1\inprod{\bvec{M}^{-1}(\bvec{u}^k(\theta))\nabla f(\bvec{x}^k)}{\hat{\bvec{x}}^{k+1} - \bvec{z}^k} d\theta} & &\\
\ & \le -\frac{G_k}{\mu}\opnorm{(\hat{\bvec{x}}^{k+1} - \bvec{z}^k)_{T^k}}_2^2, 
\end{align*}
where $G_k = \min_{i}G_{k,i}.$ Now, again using the structure of $\hat{\bvec{x}}^{k+1}$, we obtain, \begin{align*}
\lefteqn{\int_0^1\inprod{\bvec{M}^{-1}(\bvec{u}^k(\theta))\nabla f(\bvec{x}^k)}{\hat{\bvec{x}}^{k+1} - \bvec{z}^k} d\theta} & &\\
\ & \le G_k\inprod{\nabla f(\bvec{x}^k)}{\hat{\bvec{x}}^{k+1} - \bvec{z}^k}.
\end{align*}
Now the definition of $\hat{\bvec{x}}^{k+1}$ implies that \begin{align*}
\opnorm{\hat{\bvec{x}}^{k+1} - (\bvec{z}^k - \mu \nabla f(\bvec{x}^{k}))}_2^2 & \le  \opnorm{\bvec{v} - (\bvec{z}^k - \mu \nabla f(\bvec{x}^{k}))}_2^2,
\end{align*}
for any $\bvec{v}\in \real^n$, such that $\opnorm{\bvec{w}}_0\le K$. Thus, using $\bvec{z}^k$ in place of $\bvec{v}$, we obtain, \begin{align*}
\opnorm{\hat{\bvec{x}}^{k+1} - (\bvec{z}^k - \mu \nabla f(\bvec{x}^{k}))}_2^2 & \le \opnorm{\mu\nabla f(\bvec{x}^{k})}_2^2\\
\implies \inprod{\nabla f(\bvec{x}^k)}{\hat{\bvec{x}}^{k+1} - \bvec{z}^k} & \le -\frac{1}{2\mu}\opnorm{\hat{\bvec{x}}^{k+1} - \bvec{z}^k}_2^2. 
\end{align*}
Thus, we obtain, $I_2\le -\frac{G_k}{2\mu}\opnorm{\hat{\bvec{x}}^{k+1} - \bvec{z}^k}_2^2$. Therefore, $w(\hat{\bvec{x}}^{k+1}) - w(\bvec{z}^k) \le \frac{1}{2}\left(F_k^2(1+\delta_{2K}) - \frac{G_k}{\mu}\right)\opnorm{\hat{\bvec{x}}^{k+1} - \bvec{z}^k}_2^2$. Now, by the RHTP algorithm, $w(\bvec{z}^{k+1})\le w(\hat{\bvec{x}}^{k+1})\implies w(\bvec{z}^{k+1}) - w(\bvec{z}^k)\le \frac{1}{2}\left(F_k^2(1+\delta_{2K}) - \frac{G_k}{\mu}\right)\opnorm{\hat{\bvec{x}}^{k+1} - \bvec{z}^k}_2^2$. Thus, $w(\bvec{z}^{k+1})\le w(\bvec{z}^k)$, if $F_k^2(1+\delta_{2K})\le G_k/\mu $, i.e., if $\mu (1+\delta_{2K})\le (1-L_k)^2/(1-l_k)$, where $L_k = \max_i\max_{u\in [\psi_i^{-1}(-E_{k,i}),\ \psi_i^{-1}(E_{k,i})]}\gamma_i g_i''(u)$, and $l_k = \min_i\min_{u\in [\psi_i^{-1}(-E_{k,i}),\ \psi_i^{-1}(E_{k,i})]}\gamma_i g_i''(u)$. Note that since $E_{k,i}\le E_i$, we have $\psi_i^{-1}(-E_i)\le \psi_i^{-1}(-E_{k,i})$, and $\psi_i^{-1}(E_i)\ge \psi_i^{-1}(E_{k,i})$, since $\psi_i^{-1}$ is increasing for all $i=1,2,\cdots,n$. Thus, we have, $l_k\ge l$, and $L_k \le L$. Thus, if the functions $\{g_i\}_{i=1}^n$ are such that $\mu(1+\delta_{2K})<(1-L)^2/(1-l)$, then for any $k\ge 0$, we have $w(\bvec{z}^{k+1})< w(\bvec{z}^k)$. Thus the sequence $\{w(\bvec{z}^k)\}_{k\ge 0}$ is a strictly decreasing sequence of non-negative numbers (since $f$ is a non-negative valued function), and hence converges to a non-negative number. However, since $\{\bvec{z}^k\}_{k\ge 0}$ can take only finitely many values, $\{\bvec{z}^k\}_{k\ge 0}$ is eventually periodic, making $\{w(\bvec{z}^k)\}_{k\ge 0}$ eventually periodic, which means there is some finite positive integer $N_{\mathrm{it}}$, such that $w(\bvec{z}^k) = w(\bvec{z}^{N_{\mathrm{it}}}),\ \forall k\ge N_{\mathrm{it}}$. Then, we have $\hat{\bvec{x}}^{k+1} = \bvec{z}^k\implies \Lambda^{k+1} = \Lambda^k\implies \bvec{x}^{k+1} = \bvec{x}^k,\ \forall k\ge N_{\mathrm{it}}$.
\end{proof}
\subsection{Convergence rate analysis}
\label{sec:convergence-rate-analysis}
We extend the analysis of Foucart~\cite{foucart2011hard} for our case. The analysis is similar to the analysis of Foucart, with the major exception of the analysis for the diagonal matrix $\bvec{M}$ involved in our algorithm. We will require the following lemma in our analysis: 
\begin{lem}
\label{lem:inner-product-and-norm-inequality}
Let $\bvec{u},\ \bvec{v},\ \bvec{w}$ be $n$ dimensional vectors with supports $T_1,\ T_2,\ T_3$ respectively, such that $T_1,\ T_2,\ T_3\subseteq \{1,2,\cdots,\ n\}$. Let $T=T_1\cup T_2\cup T_3$. Also, let $\rho>0$ be a positive number and let $\bvec{d}(\bvec{u},\bvec{v}) = \bvec{v} - \bvec{u} - \rho\bvec{\Phi}^t\bvec{\Phi}\left(\bvec{\Psi}^{-1}(\bvec{v}) - \bvec{\Psi}^{-1}(\bvec{u})\right)$. Then, if the matrix $\bvec{\Phi}$ satisfies RIP of order $\abs{T}$, $\abs{u}_i,\ \abs{v}_i\le E$, $\forall i\in T_1\cup T_2$, and if $\rho (1+\delta_{\abs{T}})<\frac{(1-L)^2}{1-l}$, where $E$ was defined in Lemma~\ref{lem:bounds-on-iterates-zn-xhatn}, and $l,\  L$ were defined in Theorem~\ref{thm:convergence-zn}, then \begin{enumerate}
\item \begin{align}
\label{eq:inner-product-inequality}
\inprod{\bvec{w}}{\bvec{d}(\bvec{u},\bvec{v})} & \le \rho'_{\abs{T}} \opnorm{\bvec{w}}_2\opnorm{\bvec{v} - \bvec{u}}_2,
\end{align}
\item \begin{align}
\label{eq:norm-inequality}
\opnorm{(\bvec{d}(\bvec{u},\bvec{v}))_{T_3}} & \le \rho'_{\abs{T}} \opnorm{\bvec{v} - \bvec{u}}_2,
\end{align}
\end{enumerate} 
where $\rho'_{\abs{T}} = 1 - \frac{\rho (1-\delta_{\abs{T}})}{1-l}$, and $\rho'_{\abs{T}}\in (0,1)$.
\end{lem}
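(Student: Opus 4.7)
The plan is to linearize $\bvec{\Psi}^{-1}(\bvec{v})-\bvec{\Psi}^{-1}(\bvec{u})$ through an integral form via Lemma~\ref{lem:jacobians-of-psi-and-psi-inverse}, restrict everything to the index set $T$, and then control the resulting nonsymmetric operator by a similarity transformation to a symmetric one.

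Writing $\bvec{u}(\theta) = \bvec{u} + \theta(\bvec{v} - \bvec{u})$, the fundamental theorem of calculus combined with Lemma~\ref{lem:jacobians-of-psi-and-psi-inverse} yields
\begin{align*}
\bvec{\Psi}^{-1}(\bvec{v}) - \bvec{\Psi}^{-1}(\bvec{u}) = \int_0^1 \bvec{M}^{-1}(\bvec{u}(\theta))(\bvec{v} - \bvec{u})\, d\theta.
\end{align*}
Since $\bvec{v} - \bvec{u}$ is supported on $T_1 \cup T_2 \subseteq T$, restricting $\bvec{d}(\bvec{u},\bvec{v})$ to the coordinates in $T$ gives
\begin{align*}
\bvec{d}(\bvec{u},\bvec{v})_T = \int_0^1 \bigl[I_T - \rho \bvec{\Phi}^t_T\bvec{\Phi}_T\bvec{M}^{-1}_T(\bvec{u}(\theta))\bigr](\bvec{v} - \bvec{u})_T\, d\theta.
\end{align*}
The hypotheses $|u_i|,|v_i|\le E$ together with the convexity of $[-E,E]$ ensure $|u_i(\theta)| \le E$ for every $\theta$, so $\gamma_i g_i''(\psi_i^{-1}(u_i(\theta))) \in [l, L]$ and $M^{-1}_{ii}(\bvec{u}(\theta)) \in [1/(1-l),\,1/(1-L)]$ uniformly in $\theta \in [0,1]$ and $i\in T$.

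The crux of the proof is to bound the operator norm of $I_T - \rho \bvec{\Phi}^t_T\bvec{\Phi}_T\bvec{m}$ for any positive diagonal $\bvec{m}$ whose entries lie in $[1/(1-l),\,1/(1-L)]$. I would employ the similarity
\begin{align*}
I_T - \rho \bvec{\Phi}^t_T\bvec{\Phi}_T\bvec{m} = \bvec{m}^{-1/2}\bigl(I_T - \rho \bvec{m}^{1/2}\bvec{\Phi}^t_T\bvec{\Phi}_T\bvec{m}^{1/2}\bigr)\bvec{m}^{1/2}
\end{align*}
to convert the problem to one involving a symmetric central factor. Applying RIP to $\bvec{\Phi}_T\bvec{m}^{1/2}\bvec{x}$ for unit $\bvec{x}$ confines the eigenvalues of $\bvec{m}^{1/2}\bvec{\Phi}^t_T\bvec{\Phi}_T\bvec{m}^{1/2}$ to $[(1-\delta_{|T|})/(1-l),\,(1+\delta_{|T|})/(1-L)]$, whence the eigenvalues of the central factor lie in $[1-\rho(1+\delta_{|T|})/(1-L),\,1-\rho(1-\delta_{|T|})/(1-l)]$. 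Under the hypothesis $\rho(1+\delta_{|T|})<(1-L)^2/(1-l)$, the lower endpoint is strictly positive (which simultaneously justifies $\rho'_{|T|}\in(0,1)$), and the upper endpoint is exactly $\rho'_{|T|}$, so the spectral norm of the central factor is bounded by $\rho'_{|T|}$.

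For part~(1), I would rewrite $\langle \bvec{w},\bvec{d}\rangle = \langle \bvec{w}_T,\bvec{d}(\bvec{u},\bvec{v})_T\rangle$ using $\support(\bvec{w})\subseteq T_3 \subseteq T$, substitute the integral form of $\bvec{d}_T$, and apply the similarity pointwise in $\theta$ so that the bound $\rho'_{|T|}$ on the symmetric factor can be pushed through via Cauchy--Schwarz, with the flanking $\bvec{m}^{\pm 1/2}$ factors paired against $\bvec{w}_T$ and $(\bvec{v}-\bvec{u})_T$. Part~(2) then follows by duality from (1), via $\|\bvec{d}_{T_3}\|_2 = \sup\{\langle \bvec{w},\bvec{d}\rangle : \|\bvec{w}\|_2 = 1,\ \support(\bvec{w})\subseteq T_3\}$.

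The main technical obstacle is precisely the nonsymmetry: naively distributing the flanking $\bvec{m}^{\pm 1/2}$ factors through Cauchy--Schwarz would multiply the constant by $\|\bvec{m}^{1/2}\|_{op}\|\bvec{m}^{-1/2}\|_{op}$ and degrade the bound to $\rho'_{|T|}\sqrt{(1-l)/(1-L)}$ rather than the claimed $\rho'_{|T|}$. Recovering the tight form demands absorbing these prefactors by pairing the $\bvec{m}^{\pm 1/2}$ carefully with the integrand and exploiting the quantitative gap in the hypothesis $\rho(1+\delta_{|T|})<(1-L)^2/(1-l)$, which is what controls the disparity between $1-L$ and $1-l$.
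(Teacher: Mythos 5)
Your setup reproduces the paper's: the integral linearization $\bvec{\Psi}^{-1}(\bvec{v})-\bvec{\Psi}^{-1}(\bvec{u})=\int_0^1\bvec{M}^{-1}(\bvec{u}+\theta(\bvec{v}-\bvec{u}))(\bvec{v}-\bvec{u})\,d\theta$ via Lemma~\ref{lem:jacobians-of-psi-and-psi-inverse}, the restriction to $T$, the confinement of the diagonal entries to $[1/(1-l),1/(1-L)]$, and part~(2) by duality from part~(1) are all exactly what the paper does (the paper writes $\bvec{d}_T=\bvec{B}(\bvec{v}-\bvec{u})_T$ with $\bvec{B}=\bvec{I}-\rho\bvec{\Phi}_T^t\bvec{\Phi}_T\bvec{P}^{-1}$ and $\bvec{P}^{-1}=\int_0^1\bvec{M}_T^{-1}\,d\theta$, pulling the integral inside rather than working pointwise in $\theta$, but that is cosmetic). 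The genuine gap is that you never actually close part~(1) with the constant $\rho'_{\abs{T}}$: as you yourself compute, the similarity $\bvec{m}^{-1/2}(\bvec{I}-\rho\bvec{m}^{1/2}\bvec{\Phi}_T^t\bvec{\Phi}_T\bvec{m}^{1/2})\bvec{m}^{1/2}$ followed by Cauchy--Schwarz pays the price $\opnorm{\bvec{m}^{1/2}}\opnorm{\bvec{m}^{-1/2}}=\sqrt{(1-l)/(1-L)}$, and your closing sentence only promises that the prefactor can be ``absorbed'' without exhibiting the pairing that does it. The hypothesis $\rho(1+\delta_{\abs{T}})<(1-L)^2/(1-l)$ does not obviously rescue this: it guarantees the symmetric central factor has spectrum in $(0,\rho'_{\abs{T}}]$, but it does not bound the operator norm of the full nonsymmetric product by its spectral radius. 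So as written the proposal proves the inequality only with the degraded constant $\rho'_{\abs{T}}\sqrt{(1-l)/(1-L)}$, which is not what the lemma claims and is not enough for the downstream uses (e.g.\ Theorem~\ref{thm:convergence-rate-zn} needs $\mu'_{2K}<1$ exactly, not up to a factor).

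It is worth saying that you have located precisely the step the paper itself treats most casually: the paper asserts ``by Cauchy--Schwarz, $\inprod{\bvec{w}}{\bvec{d}(\bvec{u},\bvec{v})}\le\lambda_{\max}(\bvec{B})\opnorm{\bvec{w}_T}_2\opnorm{(\bvec{v}-\bvec{u})_T}_2$'' for the \emph{nonsymmetric} matrix $\bvec{B}=\bvec{I}-\rho\bvec{\Phi}_T^t\bvec{\Phi}_T\bvec{P}^{-1}$, i.e.\ it bounds a bilinear form by the largest eigenvalue (spectral radius) where the largest singular value is what Cauchy--Schwarz actually delivers; these coincide only when $\bvec{B}$ is normal, e.g.\ when $\bvec{P}^{-1}$ commutes with $\bvec{\Phi}_T^t\bvec{\Phi}_T$. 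The remainder of the paper's argument (showing $1-\rho\lambda_{\min}(\bvec{C})\ge\rho\lambda_{\max}(\bvec{C})-1$ under the stated hypothesis, then $\lambda_{\min}(\bvec{C})\ge(1-\delta_{\abs{T}})/(1-l)$ via $\lambda_{\min}(\bvec{P}^{-1/2}\bvec{\Phi}_T^t\bvec{\Phi}_T\bvec{P}^{-1/2})\ge\lambda_{\min}(\bvec{\Phi}_T^t\bvec{\Phi}_T)\lambda_{\min}(\bvec{P}^{-1})$) matches your eigenvalue computation. So your proposal is not a different proof that falls short of an airtight original; it is the same proof, stopped honestly at the point where the original quietly substitutes spectral radius for operator norm. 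To submit this you must either justify that substitution (e.g.\ by a symmetrization or commutation argument, or by reworking the bound so that only quadratic forms in a single vector appear) or accept the weakened constant and track it through the rest of the paper.
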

\begin{proof}
The proof is supplied in Appendix~\ref{sec:appendix-proof-inner-product-and-norm-inequality}.
\end{proof}
We now proceed to analyze the convergence rate of the sequence $\{\bvec{z}^k\}_{k\ge 0}$. The analysis is similar to the analysis of Theorem 3.8 of Foucart~\cite{foucart2011hard}. However, our analysis is different as the analysis takes place in a transformed domain where it requires to take care of the time varying diagonal matrix $\bvec{M}$ as defined before.
\begin{thm}
\label{thm:convergence-rate-zn}
If $\bvec{y} = \bvec{\Phi x}^\star + \bvec{e}$, under the constraint $\mu(1+\delta_{2K})<(1-L)^2/(1-l)$, the following is satisfied:
\begin{align}
\label{eq:rhtp-convergence-inequality}
\opnorm{\bvec{z}^{k+1} - \bvec{z}^\star}_2 & \le \rho_{3K}\opnorm{\bvec{z}^k - \bvec{z}^\star}_2 + \tau_{2K} \opnorm{\bvec{e}}_2.
\end{align}
\end{thm}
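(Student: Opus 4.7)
I plan to mirror Foucart's HTP convergence analysis~\cite{foucart2011hard}, but carried out entirely in the transformed $\bvec{z}$-domain, with Lemma~\ref{lem:inner-product-and-norm-inequality} taking the place of the usual RIP estimate on $\bvec{I}-\mu\bvec{\Phi}^t\bvec{\Phi}$. Since $\psi_j(0)=0$ for every $j$, $\bvec{z}^{k+1}=\bvec{\Psi}(\bvec{x}^{k+1})$ is supported on $\Lambda^{k+1}$, so Pythagoras gives
\begin{align*}
\opnorm{\bvec{z}^{k+1}-\bvec{z}^\star}_2^2 = \opnorm{(\bvec{z}^{k+1}-\bvec{z}^\star)_{\Lambda^{k+1}}}_2^2 + \opnorm{\bvec{z}^\star_{\Lambda^\star\setminus\Lambda^{k+1}}}_2^2.
\end{align*}
I will bound the two summands separately and then solve the resulting system for $\opnorm{\bvec{z}^{k+1}-\bvec{z}^\star}_2$.

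For the off-support piece I use the identity $\bvec{M}(\bvec{z}^k)\nabla w(\bvec{z}^k)=\nabla f(\bvec{x}^k)$ (already exploited in the proof of Theorem~\ref{thm:equivalent-htp-dynamics}), which tells me that $\Lambda^{k+1}$ is precisely the top-$K$ support of $\bvec{v}:=\bvec{z}^k-\mu\nabla f(\bvec{x}^k)$. Comparing $\opnorm{\bvec{v}_{\Lambda^{k+1}}}_2$ with $\opnorm{\bvec{v}_{\Lambda^\star}}_2$, canceling the common piece on $\Lambda^{k+1}\cap\Lambda^\star$, and using that $\bvec{z}^\star$ vanishes on $\Lambda^{k+1}\setminus\Lambda^\star$, the triangle inequality yields
\begin{align*}
\opnorm{\bvec{z}^\star_{\Lambda^\star\setminus\Lambda^{k+1}}}_2 \le \sqrt{2}\opnorm{(\bvec{v}-\bvec{z}^\star)_{\Lambda^{k+1}\triangle\Lambda^\star}}_2.
\end{align*}
Using $\bvec{y}=\bvec{\Phi x}^\star+\bvec{e}$, I rewrite $\bvec{v}-\bvec{z}^\star = -\bvec{d}(\bvec{z}^k,\bvec{z}^\star)+\mu\bvec{\Phi}^t\bvec{e}$ and apply Lemma~\ref{lem:inner-product-and-norm-inequality} (second part) with $T_1=\Lambda^k$, $T_2=\Lambda^\star$, $T_3=\Lambda^{k+1}\triangle\Lambda^\star$ (so $|T|\le 3K$), together with the standard bound $\opnorm{(\bvec{\Phi}^t\bvec{e})_I}_2\le\sqrt{1+\delta_{|I|}}\opnorm{\bvec{e}}_2$. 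The pointwise bounds on $\bvec{z}^k$ and $\bvec{z}^\star$ required to invoke Lemma~\ref{lem:inner-product-and-norm-inequality} are supplied by Lemma~\ref{lem:bounds-on-iterates-zn-xhatn}. This produces
\begin{align*}
\opnorm{\bvec{z}^\star_{\Lambda^\star\setminus\Lambda^{k+1}}}_2 \le \sqrt{2}\mu'_{3K}\opnorm{\bvec{z}^k-\bvec{z}^\star}_2 + \sqrt{2}\mu\sqrt{1+\delta_{2K}}\opnorm{\bvec{e}}_2.
\end{align*}

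For the on-support piece I appeal to the normal equations $\bvec{\Phi}^t_{\Lambda^{k+1}}(\bvec{y}-\bvec{\Phi x}^{k+1})=0$ produced by the least-squares step, which rearrange to $\bvec{\Phi}^t_{\Lambda^{k+1}}\bvec{\Phi}(\bvec{x}^\star-\bvec{x}^{k+1})=-\bvec{\Phi}^t_{\Lambda^{k+1}}\bvec{e}$. Substituting this into the restriction of $\bvec{d}(\bvec{z}^{k+1},\bvec{z}^\star)$ to $\Lambda^{k+1}$ yields
\begin{align*}
(\bvec{d}(\bvec{z}^{k+1},\bvec{z}^\star))_{\Lambda^{k+1}} = (\bvec{z}^\star-\bvec{z}^{k+1})_{\Lambda^{k+1}} + \mu\bvec{\Phi}^t_{\Lambda^{k+1}}\bvec{e},
\end{align*}
and then Lemma~\ref{lem:inner-product-and-norm-inequality} (second part) applied with $T_1=\Lambda^{k+1}$, $T_2=\Lambda^\star$, $T_3=\Lambda^{k+1}$ (so $|T|\le 2K$), combined with the triangle inequality and RIP for the noise term, gives
\begin{align*}
\opnorm{(\bvec{z}^{k+1}-\bvec{z}^\star)_{\Lambda^{k+1}}}_2 \le \mu'_{2K}\opnorm{\bvec{z}^{k+1}-\bvec{z}^\star}_2 + \mu\sqrt{1+\delta_{2K}}\opnorm{\bvec{e}}_2.
\end{align*}

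To close the proof, I substitute both bounds into the Pythagorean identity and write $\alpha=\opnorm{\bvec{z}^{k+1}-\bvec{z}^\star}_2$, obtaining a quadratic inequality of the form $(1-(\mu'_{2K})^2)\alpha^2 \le 2\mu'_{2K}\alpha\,\xi + \xi^2 + \gamma^2$, where $\xi$ is the noise coefficient from the on-support bound and $\gamma$ is the off-support bound. Completing the square in $\alpha$ and then using $\sqrt{p+q}\le\sqrt{p}+\sqrt{q}$ isolates $\alpha \le \gamma/\sqrt{1-(\mu'_{2K})^2} + \xi/(1-\mu'_{2K})$; substituting the explicit expression for $\gamma$ and collecting the $\opnorm{\bvec{z}^k-\bvec{z}^\star}_2$ and $\opnorm{\bvec{e}}_2$ contributions delivers the stated inequality with $\rho_{3K}$ and $\tau_{2K}$ appearing as defined. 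The delicate step is precisely this final algebra: the noise contribution acquires factors of $(1-l)$ and $\mu$ through $1-\mu'_{2K}=\mu(1-\delta_{2K})/(1-l)$, and reconciling them with the stated $\tau_{2K}=\frac{\sqrt{2}\mu\sqrt{1+\delta_{2K}}}{\sqrt{1-(\mu'_{2K})^2}}+\frac{\sqrt{1+\delta_{2K}}}{1-\delta_{2K}}$ is the main obstacle. The hypothesis $\mu(1+\delta_{2K})<(1-L)^2/(1-l)$ is what guarantees $\mu'_{2K},\mu'_{3K}\in(0,1)$, ensuring the denominators $\sqrt{1-(\mu'_{2K})^2}$ and $1-\mu'_{2K}$ are strictly positive.
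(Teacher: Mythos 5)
Your proposal follows essentially the same route as the paper: split $\opnorm{\bvec{z}^{k+1}-\bvec{z}^\star}_2$ into the piece on $\Lambda^{k+1}$ (controlled via the least-squares optimality condition and Lemma~\ref{lem:inner-product-and-norm-inequality}, yielding the $\mu'_{2K}$ bound) and the piece off $\Lambda^{k+1}$ (controlled via the thresholding comparison on $\Lambda^{k+1}\triangle\Lambda$, yielding the $\sqrt{2}\mu'_{3K}$ bound), then close with the quadratic inequality in $\opnorm{\bvec{z}^{k+1}-\bvec{z}^\star}_2$. The only cosmetic differences are that you invoke the norm form of Lemma~\ref{lem:inner-product-and-norm-inequality} where the paper uses the inner-product form, and the residual mismatch you flag in the noise coefficient (the $(1-l)/\mu$ factor from $1-\mu'_{2K}$ versus the stated $\tau_{2K}$) is present in the paper's own algebra as well, so it is not a defect of your argument relative to theirs.
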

\begin{proof}
We start with the observation that $\nabla_{\Lambda^{k+1}}w(\bvec{z}^{k+1})=\bvec{0}_{\Lambda^{k+1}}$. Thus, $\inprod{\bvec{z} - \bvec{z}^{k+1}}{\nabla w(\bvec{z}^{k+1})} = 0$, $\forall \bvec{z}\in \real^n$, with $\support(\bvec{z}) = \Lambda^{k+1}$. Then, using $\bvec{z}^\star = \bvec{\Psi}(\bvec{x}^\star)$ and $\bvec{u} = \begin{bmatrix}
(\bvec{z}^{k+1} - \bvec{z}^\star)_{\Lambda^{k+1}}\\
\bvec{0}_{{\Lambda^{k+1}}^C}
\end{bmatrix} $, \begin{align*}
\lefteqn{\opnorm{(\bvec{z}^{k+1} - \bvec{z}^\star)_{\Lambda^{k+1}}}_2^2} & &\\
\ & = \inprod{\bvec{z}^{k+1} - \bvec{z}^\star}{\bvec{u}}\\
\ & = \inprod{\bvec{u}}{\bvec{z}^{k+1} - \bvec{z}^\star - \mu \bvec{M}(\bvec{z}^{k+1})\nabla w(\bvec{z}^{k+1})}\\
\ & = \inprod{\bvec{u}}{\bvec{z}^{k+1} - \bvec{z}^\star - \mu (\nabla f(\bvec{x}^{k+1}) - \nabla f(\bvec{x}^\star))}\\
\ & - \mu \inprod{\bvec{u}}{\nabla f(\bvec{x}^\star)}\\
\ & = \inprod{\bvec{u}}{\bvec{z}^{k+1} - \bvec{z}^\star - \mu \bvec{\Phi}^t\bvec{\Phi}(\bvec{x}^{k+1} - \bvec{x}^\star)}\\
\ & + \mu \inprod{\bvec{u}}{\bvec{\Phi}^t(\bvec{y} - \bvec{\Phi}\bvec{x}^\star)}\\
\ & = J_1 + J_2.
\end{align*}
Now, using $\bvec{x}^{k+1} = \bvec{\Psi}^{-1}(\bvec{z}^{k+1})$ and $\bvec{x}^\star = \bvec{\Psi}^{-1}(\bvec{z}^\star)$, along with Lemma~\ref{lem:inner-product-and-norm-inequality}, we get $J_1\le \mu'_{2K}\opnorm{\bvec{u}}_2 \opnorm{\bvec{z}^{k+1} - \bvec{z}^\star}_2$. On the other hand, $J_2 = \inprod{(\bvec{z}^{k+1} - \bvec{z}^\star)_{\Lambda^{k+1}}}{\bvec{\Phi}^t_{\Lambda^{k+1}}\bvec{e}}\le \sqrt{1+\delta_{K}}\opnorm{(\bvec{z}^{k+1} - \bvec{z}^\star)_{\Lambda^{k+1}}}_2\opnorm{\bvec{e}}_2$, where the last step used Cauchy-Schwarz inequality as well as the RIP. Thus, \begin{align*}
\lefteqn{\opnorm{(\bvec{z}^{k+1} - \bvec{z}^\star)_{\Lambda^{k+1}}}_2^2} & &\\
\ & \le \rho_{2K}\opnorm{(\bvec{z}^{k+1} - \bvec{z}^\star)_{\Lambda^{k+1}}}_2 \opnorm{\bvec{z}^{k+1} - \bvec{z}^\star}_2 \\
\ & + \sqrt{1+\delta_{K}}\opnorm{(\bvec{z}^{k+1} - \bvec{z}^\star)_{\Lambda^{k+1}}}_2\opnorm{\bvec{e}}_2
\end{align*}
This implies \begin{align}
\label{eq:rhtp-analysis-least-square-solution-step-inequality1}
\opnorm{(\bvec{z}^{k+1} - \bvec{z}^\star)_{\Lambda^{k+1}}}_2 & \le \mu'_{2K}\opnorm{\bvec{z}^{k+1} - \bvec{z}^\star}_2 +  \sqrt{1+\delta_{K}}\opnorm{\bvec{e}}_2
\end{align}
Then, with $A = \opnorm{\bvec{z}^{k+1} - \bvec{z}^\star}_2,\ B = \opnorm{(\bvec{z}^{k+1} - \bvec{z}^\star)_{(\Lambda^{k+1})^C}}_2$, inequality~\eqref{eq:rhtp-analysis-least-square-solution-step-inequality1} can be written as $\sqrt{A^2 - B^2}\le \rho_{2K} A +\sqrt{1+\delta_{K}}\opnorm{\bvec{e}}_2$. Consequently, solving this quadratic inequality appropriately, we obtain, \begin{align}
\opnorm{\bvec{z}^{k+1} - \bvec{z}^\star}_2 & \le \frac{1}{\sqrt{1 - (\mu'_{2K})^2}}\opnorm{(\bvec{z}^{k+1} - \bvec{z}^\star)_{(\Lambda^{k+1})^C}}_2\nonumber\\
\label{eq:rhtp-analysis-least-square-solution-step-main-inequality}
\ & + \frac{\sqrt{1+\delta_{K}}}{1-\delta_{2K}}\opnorm{\bvec{e}}_2
\end{align}
Note that for the inequality~\eqref{eq:rhtp-analysis-least-square-solution-step-main-inequality} to be valid, we require $\abs{\mu'_{2K}}<1$, which is ensured by Lemma~\ref{lem:inner-product-and-norm-inequality}. 

We now use the step $(1)$ of the RHTP algorithm. This results in \begin{align}
\lefteqn{\opnorm{(\bvec{z}^k + \mu \bvec{\Phi}^t(\bvec{y} - \bvec{\Phi}\bvec{x}^k))_{\Lambda^{k+1}}}_2} & & \nonumber\\
\ & \ge \opnorm{(\bvec{z}^k + \mu \bvec{\Phi}^t(\bvec{y} - \bvec{\Phi}\bvec{x}^k))_{\Lambda}}_2\nonumber\\
\implies \lefteqn{\opnorm{(\bvec{z}^k + \mu \bvec{\Phi}^t(\bvec{y} - \bvec{\Phi}\bvec{x}^k))_{\Lambda^{k+1}\setminus \Lambda}}_2} & & \nonumber\\
\label{eq:rhtp-analysis-hard-thresholding-step-inequality1}
\ & \ge \opnorm{(\bvec{z}^k + \mu \bvec{\Phi}^t(\bvec{y} - \bvec{\Phi}\bvec{x}^k))_{\Lambda\setminus \Lambda^{k+1}}}_2
\end{align} 
Now the LHS of inequality~\eqref{eq:rhtp-analysis-hard-thresholding-step-inequality1} can be upper bounded as below: \begin{align}
\lefteqn{\opnorm{(\bvec{z}^k + \mu \bvec{\Phi}^t(\bvec{y} - \bvec{\Phi}\bvec{x}^k))_{\Lambda^{k+1}\setminus \Lambda}}_2} & &\nonumber\\
%\ & = \opnorm{(\bvec{z}^k - \bvec{z}^\star - \mu \bvec{\Phi}^t\bvec{\Phi}(\bvec{x}^k - \bvec{x}^\star) + \mu \bvec{\Phi}^t \bvec{e})_{\Lambda^{k+1}\setminus \Lambda}}_2\nonumber\\
%
\ & \le \opnorm{(\bvec{z}^k - \bvec{z}^\star - \mu \bvec{\Phi}^t\bvec{\Phi}(\bvec{x}^k - \bvec{x}^\star))_{\Lambda^{k+1}\setminus \Lambda}}_2 \nonumber\\
\label{eq:rhtp-analysis-hard-thresholding-step-lhs-upper-bound}
\ & + \mu \opnorm{(\bvec{\Phi}^t \bvec{e})_{\Lambda^{k+1}\setminus \Lambda}}_2.
\end{align}
Similarly, the RHS of inequality~\eqref{eq:rhtp-analysis-hard-thresholding-step-inequality1} can be lower bounded as below: \begin{align}
\lefteqn{\opnorm{(\bvec{z}^k + \mu \bvec{\Phi}^t(\bvec{y} - \bvec{\Phi}\bvec{x}^k))_{\Lambda\setminus \Lambda^{k+1}}}_2} & &\nonumber\\
%\ & = \opnorm{((\bvec{z}^{k+1} - \bvec{z}^\star) - (\bvec{z}^k - \bvec{z}^\star - \mu \bvec{\Phi}^t\bvec{\Phi}(\bvec{x}^k - \bvec{x}^\star) + \mu \bvec{\Phi}^t\bvec{e}))_{\Lambda\setminus \Lambda^{k+1}}}_2\nonumber\\
%
\ & \ge \opnorm{(\bvec{z}^{k+1} - \bvec{z}^\star)_{\Lambda\setminus \Lambda^{k+1}}}_2\nonumber\\
\ & - \opnorm{(\bvec{z}^k - \bvec{z}^\star - \mu \bvec{\Phi}^t\bvec{\Phi}(\bvec{x}^k - \bvec{x}^\star))_{\Lambda\setminus \Lambda^{k+1}}}_2\nonumber\\
\label{eq:rhtp-analysis-hard-thresholding-step-rhs-lower-bound}
\ &  - \mu \opnorm{(\bvec{\Phi}^t\bvec{e})_{\Lambda\setminus \Lambda^{k+1}}}_2.
\end{align}
Then, writing $\bvec{a}^k = \bvec{z}^k - \bvec{z}^\star - \mu \bvec{\Phi}^t\bvec{\Phi}(\bvec{x}^k - \bvec{x}^\star)$ and using inequalities~\eqref{eq:rhtp-analysis-hard-thresholding-step-lhs-upper-bound},~\eqref{eq:rhtp-analysis-hard-thresholding-step-rhs-lower-bound}, the inequality~\eqref{eq:rhtp-analysis-hard-thresholding-step-inequality1} implies:\begin{align*}
\lefteqn{\opnorm{(\bvec{z}^{k+1} - \bvec{z}^\star)_{\Lambda\setminus \Lambda^{k+1}}}_2} & &\\
 \ & \le \opnorm{(\bvec{a}^k)_{\Lambda\setminus \Lambda^{k+1}}}_2 + \opnorm{(\bvec{a}^k)_{\Lambda\setminus \Lambda^{k+1}}} \\
\ & + \mu \opnorm{(\bvec{\Phi}^t\bvec{e})_{\Lambda\setminus \Lambda^{k+1}}}_2 + \mu \opnorm{(\bvec{\Phi}^t\bvec{e})_{\Lambda^{k+1}\setminus\Lambda}}_2.
\end{align*}
Then, using $A+B\le \sqrt{2(A^2+B^2)}$ for non-negative $A,B$, we obtain:\begin{align*}
\lefteqn{\opnorm{(\bvec{z}^{k+1} - \bvec{z}^\star)_{\Lambda\setminus \Lambda^{k+1}}}_2} & &\\
\ & \le \sqrt{2}\opnorm{(\bvec{a}^k)_{\Lambda \Delta \Lambda^{k+1}}}_2 + \sqrt{2}\mu \opnorm{(\bvec{\Phi}^t\bvec{e})_{\Lambda \Delta \Lambda^{k+1}}}_2.
\end{align*}
Now, using Lemma~\ref{lem:inner-product-and-norm-inequality}, we find that $\opnorm{(\bvec{a}^k)_{\Lambda \Delta \Lambda^{k+1}}}_2\le \mu'_{3K}\opnorm{\bvec{z}^k - \bvec{z}^\star}_2$. Consequently,
\begin{align}
\lefteqn{\opnorm{(\bvec{z}^{k+1} - \bvec{z}^\star)_{\Lambda\setminus \Lambda^{k+1}}}_2} & &\nonumber\\
\ & \le \sqrt{2}\mu'_{3K}\opnorm{\bvec{z}^k - \bvec{z}^\star}_2 + \sqrt{2}\mu \opnorm{(\bvec{\Phi}^t\bvec{e})}_2\nonumber\\
\label{eq:rhtp-analysis-hard-thresholding-step-main-inequality}
\ & \le \sqrt{2}\mu'_{3K}\opnorm{\bvec{z}^k - \bvec{z}^\star}_2 + \sqrt{2}\mu \sqrt{1+\delta_{2K}}\opnorm{\bvec{e}}_2.
\end{align}
Finally, combining inequalities~\eqref{eq:rhtp-analysis-least-square-solution-step-main-inequality} and \eqref{eq:rhtp-analysis-least-square-solution-step-main-inequality}, we arrive at the desired inequality following inequality. Note that $\rho_{3K}<1$ if $\mu'_{3K}<1/\sqrt{3}$, since $\mu'_{3K}>\mu'_{2K}$. Thus, $\rho_{3K}<1$ if $\mu > \frac{\left(1 - \frac{1}{\sqrt{3}}\right)(1-l)}{1-\delta_{3K}}$. Thus, if $l > 0$, the lower bound on $\mu$ which suffices for the algorithm to have a linear convergence rate, is smaller than the one obtained for HTP, which corresponds to the case $l = 0$.
\end{proof}
\subsection{Analysis of the number of iteration required by RHTP for correct support recovery}
\label{sec:analysis-number-of-iteration}
As suggested by Theorem~\ref{thm:convergence-zn}, the RHTP algorithm converges in finite number of iterations. In this section, we proceed to estimate the number of iterations that RHTP takes to recover the support of a $K$-sparse vector $\bvec{x}^\star$ from a measurement vector $\bvec{y} = \bvec{\Phi x} + \bvec{e}$. We present two results that estimate this iteration complexity, one using the knowledge of some structure of the unknown signal $\bvec{x}^\star$, and another which is independent of the structure of $\bvec{x}^\star$. The analyses presented are similar to the analyses of Corollary 3.6 of~\cite{foucart2011hard} and Lemma 3 and Theorem 5 of~\cite{bouchot2016hard}. However, the main novelty in our analyses lie in the way we take care of the effect of the time-varying diagonal matrix $\bvec{M}$ in our analyses.
\begin{thm}
\label{thm:no-iteration-signal-dependent-estimate}
Suppose that $\rho_{3K}<1$. Let $\tau'_1 = \sqrt{2(1+\delta_2)}\mu + \frac{\tau_{2K}}{1-\rho_{3K}}$, and let the error $\bvec{e}$ is such that $\tau_1\opnorm{\bvec{e}}_2 \le \abs{z^\star_{\min}}$. Then, the RHTP algorithm recovers the correct support and converges in at most $\left\lceil \frac{\left(\sqrt{2}\mu'_{3K}\opnorm{\bvec{z}^0 - \bvec{z}^\star}_2\right)/\left(\abs{z_{\min}^\star} - \tau_1\opnorm{\bvec{e}}_2\right) }{\ln\left(1/\rho_{3K}\right)} \right\rceil$ iterations.
\end{thm}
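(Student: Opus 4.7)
The plan is to chain two ingredients: (i) the linear-convergence recursion of Theorem~\ref{thm:convergence-rate-zn}, unrolled to yield a closed-form geometric bound on $\opnorm{\bvec{z}^k-\bvec{z}^\star}_2$, and (ii) the intermediate identification-stage estimate that already appears inside that proof, which shows that the hard-thresholding step cannot miss any index of the true support $\Lambda$ once $\opnorm{\bvec{z}^k-\bvec{z}^\star}_2$ falls below a threshold governed by $\abs{z^\star_{\min}}$ and $\opnorm{\bvec{e}}_2$. Combining the two isolates the smallest $k$ for which correct support identification is guaranteed, and a short additional argument promotes support recovery to convergence of the iterates.

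First I would iterate the recursion $\opnorm{\bvec{z}^{k+1}-\bvec{z}^\star}_2\le \rho_{3K}\opnorm{\bvec{z}^k-\bvec{z}^\star}_2+\tau_{2K}\opnorm{\bvec{e}}_2$ supplied by inequality~\eqref{eq:rhtp-convergence-inequality}. Summing the resulting geometric series, legitimate because $\rho_{3K}<1$ by hypothesis, produces
\begin{equation*}
\opnorm{\bvec{z}^k-\bvec{z}^\star}_2\le \rho_{3K}^{k}\opnorm{\bvec{z}^0-\bvec{z}^\star}_2+\frac{\tau_{2K}}{1-\rho_{3K}}\opnorm{\bvec{e}}_2.
\end{equation*}

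Next I would invoke the intermediate inequality~\eqref{eq:rhtp-analysis-hard-thresholding-step-main-inequality} established during the proof of Theorem~\ref{thm:convergence-rate-zn}, which states that $\opnorm{(\bvec{z}^{k+1}-\bvec{z}^\star)_{\Lambda\setminus\Lambda^{k+1}}}_2 \le \sqrt{2}\mu'_{3K}\opnorm{\bvec{z}^k-\bvec{z}^\star}_2+\sqrt{2}\mu\sqrt{1+\delta_{2K}}\opnorm{\bvec{e}}_2$. The crucial observation is that $\bvec{z}^{k+1}$ vanishes off $\Lambda^{k+1}$, since $\bvec{x}^{k+1}$ is supported there and $\psi_j(0)=0$ for every $j$; hence every index $i\in\Lambda\setminus\Lambda^{k+1}$ contributes $\abs{z^\star_i}\ge\abs{z^\star_{\min}}$ to the left-hand side. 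A failure of identification at iteration $k$, meaning $\Lambda\not\subseteq \Lambda^{k+1}$, therefore forces $\opnorm{(\bvec{z}^{k+1}-\bvec{z}^\star)_{\Lambda\setminus\Lambda^{k+1}}}_2\ge\abs{z^\star_{\min}}$. Contrapositively, whenever
\begin{equation*}
\sqrt{2}\mu'_{3K}\opnorm{\bvec{z}^k-\bvec{z}^\star}_2+\sqrt{2}\mu\sqrt{1+\delta_{2K}}\opnorm{\bvec{e}}_2<\abs{z^\star_{\min}}
\end{equation*}
holds, the inclusion $\Lambda\subseteq\Lambda^{k+1}$ is forced, and since $\abs{\Lambda}=\abs{\Lambda^{k+1}}=K$ equality $\Lambda^{k+1}=\Lambda$ follows.

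Finally, I would substitute the geometric bound into this sufficient condition and absorb the $\opnorm{\bvec{e}}_2$-dependent contributions into $\tau_1$ as defined in the statement, reducing the condition to $\sqrt{2}\mu'_{3K}\rho_{3K}^{k}\opnorm{\bvec{z}^0-\bvec{z}^\star}_2<\abs{z^\star_{\min}}-\tau_1\opnorm{\bvec{e}}_2$. The small-noise hypothesis $\tau_1\opnorm{\bvec{e}}_2\le\abs{z^\star_{\min}}$ keeps the right-hand side non-negative, and taking logarithms followed by division by $\ln(1/\rho_{3K})>0$ yields the advertised ceiling expression. Once the correct support has been identified, $\bvec{x}^{k+1}$ equals the least-squares solution restricted to $\Lambda$; because the geometric upper bound on $\opnorm{\bvec{z}^k-\bvec{z}^\star}_2$ only improves with further iterations, the sufficient condition is inherited at step $k+1$ as well, forcing $\Lambda^{k+2}=\Lambda$ and hence $\bvec{x}^{k+2}=\bvec{x}^{k+1}$, which is precisely the convergence criterion implicit in Theorem~\ref{thm:convergence-zn}. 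The main obstacle I anticipate is purely algebraic bookkeeping: reconciling the constant $\tau_1$ written in the statement with the combination $\sqrt{2}\mu\sqrt{1+\delta_{2K}}+\sqrt{2}\mu'_{3K}\tau_{2K}/(1-\rho_{3K})$ that actually emerges from the substitution, including making the subscripts on $\delta$ consistent with those used elsewhere in the paper; no conceptual difficulty is expected beyond this.
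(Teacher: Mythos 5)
Your proposal is correct and reaches the same final bound, but the key step is organized differently from the paper's. The paper proves $\Lambda^{k+1}=\Lambda$ by returning to the raw thresholding comparison: it demands that for every $i\in\Lambda$ and $j\in\Lambda^C$ the $i$-th entry of $\bvec{z}^k+\mu\bvec{\Phi}^t(\bvec{y}-\bvec{\Phi x}^k)$ exceed the $j$-th in magnitude, lower/upper bounds the two sides entrywise, and then controls the resulting two-coordinate vectors via Lemma~\ref{lem:inner-product-and-norm-inequality} applied to the index set $\{i,j\}$ (whence the $\mu'_{2K+1}\le\mu'_{3K}$ and $\sqrt{1+\delta_2}$ constants). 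You instead recycle the aggregate inequality~\eqref{eq:rhtp-analysis-hard-thresholding-step-main-inequality} already established inside the proof of Theorem~\ref{thm:convergence-rate-zn}, and extract support inclusion from the observation that any missed true index forces $\opnorm{(\bvec{z}^{k+1}-\bvec{z}^\star)_{\Lambda\setminus\Lambda^{k+1}}}_2\ge\abs{z^\star_{\min}}$, since $\bvec{z}^{k+1}$ vanishes off $\Lambda^{k+1}$ thanks to $\psi_j(0)=0$. The two routes are equivalent in substance --- inequality~\eqref{eq:rhtp-analysis-hard-thresholding-step-main-inequality} is itself derived from the same thresholding comparison --- but yours avoids re-deriving the pointwise bounds and yields the constants with less bookkeeping; the paper's entrywise version is marginally sharper in principle (it uses $\delta_2$ and $\mu'_{2K+1}$ before relaxing to $\mu'_{3K}$), though both collapse to the same stated constant $\tau_1$. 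Your closing remark that the sufficient condition is inherited at step $k+1$ because the geometric envelope is monotone in $k$ correctly supplies the two-consecutive-iterations stopping argument that the paper invokes ($\Lambda^{k-1}=\Lambda^{k}=\Lambda$) but does not spell out. The residual discrepancies you flag ($\delta_2$ versus $\delta_{2K}$, and $\tau'_1$ versus $\tau_1$) are typographical inconsistencies already present in the paper's own statement, not gaps in your argument.
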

\begin{proof}
The RHTP algorithm stops in $k$ iterations recovering the true support $\Lambda$, if $\Lambda^{k-1} = \Lambda$, and $\Lambda^{k} = \Lambda$. Now, we first find conditions sufficient to ensure $\Lambda^k = \Lambda$. This is ensured if $\forall i\in \Lambda$, and $\forall j\in \Lambda^C$, \begin{align}
\label{eq:finite-no-convergence-signal-dependent-bound-preliminary-inequality}
\abs{(\bvec{z}^k + \mu \bvec{\Phi}^t (\bvec{y} - \bvec{\Phi x}^k))_i} > \abs{(\bvec{z}^k + \mu \bvec{\Phi}^t (\bvec{y} - \bvec{\Phi x}^k))_j}.
\end{align}
The LHS of inequality~\eqref{eq:finite-no-convergence-signal-dependent-bound-preliminary-inequality} can be lower bounded as below: \begin{align}
\lefteqn{\abs{(\bvec{z}^k + \mu \bvec{\Phi}^t (\bvec{y} - \bvec{\Phi x}^k))_i}}  & & \nonumber\\
\label{eq:finite-no-convergence-signal-dependent-bound-lhs-lower-bound}
\ & \ge \abs{z_{\min}^\star} - \abs{(\bvec{z}^k - \bvec{z}^\star - \mu \bvec{\Phi}^t \bvec{\Phi}(\bvec{x}^\star - \bvec{x}^k))_i} - \mu\abs{(\bvec{\Phi}^t\bvec{e})_i}.
\end{align}
On the other hand, the RHS of inequality~\eqref{eq:finite-no-convergence-signal-dependent-bound-preliminary-inequality} can be upper bounded as below:\begin{align}
\lefteqn{\abs{(\bvec{z}^k + \mu \bvec{\Phi}^t (\bvec{y} - \bvec{\Phi x}^k))_j}}  & & \nonumber\\
\label{eq:finite-no-convergence-signal-dependent-bound-rhs-upper-bound}
\ & \le \abs{(\bvec{z}^k - \bvec{z}^\star - \mu \bvec{\Phi}^t \bvec{\Phi}(\bvec{x}^\star - \bvec{x}^k))_j} + \mu\abs{(\bvec{\Phi}^t\bvec{e})_j}.
\end{align}
Thus using inequalities~\eqref{eq:finite-no-convergence-signal-dependent-bound-lhs-lower-bound}, and~\eqref{eq:finite-no-convergence-signal-dependent-bound-rhs-upper-bound}, one can see that the inequality~\eqref{eq:finite-no-convergence-signal-dependent-bound-preliminary-inequality} is ensured if \begin{align*}
\abs{z_{\min}^\star} & \ge \abs{(\bvec{z}^k - \bvec{z}^\star - \mu \bvec{\Phi}^t \bvec{\Phi}(\bvec{x}^\star - \bvec{x}^k))_i}\\
\ & + \abs{(\bvec{z}^k - \bvec{z}^\star - \mu \bvec{\Phi}^t \bvec{\Phi}(\bvec{x}^\star - \bvec{x}^k))_j}\\
\ & + \mu\abs{(\bvec{\Phi}^t\bvec{e})_i} + \mu\abs{(\bvec{\Phi}^t\bvec{e})_j},
\end{align*}
or equivalently, if \begin{align*}
\abs{z_{\min}^\star} & \ge \sqrt{2}\abs{(\bvec{z}^k - \bvec{z}^\star - \mu \bvec{\Phi}^t \bvec{\Phi}(\bvec{x}^\star - \bvec{x}^k))_{\{i,j\}}}\\
\ & + \sqrt{2}\mu\abs{(\bvec{\Phi}^t\bvec{e})_{\{i,j\}}},
\end{align*}
Now, using Lemma~\ref{lem:inner-product-and-norm-inequality} and RIP,\begin{align*}
\ & \abs{(\bvec{z}^k - \bvec{z}^\star - \mu \bvec{\Phi}^t \bvec{\Phi}(\bvec{x}^\star - \bvec{x}^k))_{\{i,j\}}}\\
\ & + \mu\abs{(\bvec{\Phi}^t\bvec{e})_{\{i,j\}}}\\
\ & \le \mu'_{2K+1}\opnorm{\bvec{z}^k - \bvec{z}^\star}_2 + \sqrt{(1+\delta_2)}\mu \opnorm{\bvec{e}}_2\\
\ & \le \mu'_{3K}\opnorm{\bvec{z}^k - \bvec{z}^\star}_2 + \sqrt{(1+\delta_2)}\mu \opnorm{\bvec{e}}_2\\
\ & \le \sqrt{1-(\mu'_{2K})^2}\rho_{3K} \opnorm{\bvec{z}^k - \bvec{z}^\star}_2 + \sqrt{(1+\delta_2)}\mu \opnorm{\bvec{e}}_2
\end{align*} or equivalently, if \begin{align*}
\abs{z_{\min}^\star} & \ge \sqrt{2}\mu'_{3K}\opnorm{\bvec{z}^k - \bvec{z}^\star}_2 + \sqrt{2(1+\delta_2)}\mu \opnorm{\bvec{e}}_2.
\end{align*}
Since $\opnorm{\bvec{z}^k - \bvec{z}^\star}_2\le \rho_{3K}^n\opnorm{\bvec{z}^0 - \bvec{z}^\star}_2 + \frac{\tau}{1-\rho_{3K}}\opnorm{\bvec{e}}_2$, the previous inequality is satisfied if \begin{align*}
\abs{z_{\min}^\star} & \ge \sqrt{2}\mu'_{3K}\rho_{3K}^k \opnorm{\bvec{z}^0 - \bvec{z}^\star}_2 + \tau_1 \opnorm{\bvec{e}}_2.
\end{align*}
 Thus, if $\tau_1\opnorm{\bvec{e}}_2<\abs{z^\star_{\min}},$ then RHTP recovers the correct support in $\left\lceil \frac{\left(\sqrt{2}\mu'_{3K}\opnorm{\bvec{z}^0 - \bvec{z}^\star}_2\right)/\left(\abs{z_{\min}^\star} - \tau_1\opnorm{\bvec{e}}_2\right) }{\ln\left(1/\rho_{3K}\right)} \right\rceil$ iterations.
\end{proof}

%The main proof technique follows Corollary 3.6 of Foucart~\cite{foucart2011hard}, and Theorems 5 and 6 of Bouchot \emph{et al}~\cite{bouchot2016hard}. To proceed with the analysis, we require the following Lemma which estimates the number of additional iterations required by RHTP to capture some additional correct support indices of $\bvec{x}^\star$. 

We now analyze the number of iteration required for perfect support recovery of the unknown signal, without requiring any knowledge about the minimum absolute value of the unknown signal. The analysis follows from the Lemmas~$3$ and Theorems~$5$ and $6$ of Bouchot~\emph{et al}~\cite{bouchot2016hard}. However, the presence of the diagonal matrix $\bvec{M}$ makes our analysis different.

\begin{lem}
\label{lem:condition-on-x-finite-convergence}
Suppose that the RHTP algorithm has produced a sequence of index sets $\{\Lambda^k\}_{k\ge 0}$, with measurement vector $\bvec{y} = \bvec{\Phi}\bvec{x}^\star + \bvec{e}$. For integers $n,p\ge 0$, let $\Lambda^k$ contains the indices of the $p$ largest absolute entries of $\bvec{z}^\star$. Then, for $k',q\ge 1$, $\Lambda^{k+k'}$ will contain indices of the $p+q$ largest absolute entries of $\bvec{z}^\star$ if the following is satisfied:
\begin{align}
\label{eq:no-of-iteration-lemma-inequality}
r_{p+q}(\bvec{z}^\star)>\rho_{3K}^{k'} \opnorm{r(\bvec{z}^\star)_{\{p+1,\cdots,K\}}}_2 + \kappa_{3K}\opnorm{\bvec{e}}_2
\end{align}  
where \begin{align*}
\kappa_{3K} = \frac{\sqrt{2(1+\delta_{3K})}(\mu'_{3K} + 1 - \delta_{3K})}{1-\delta_{3K}} + \frac{\sqrt{2}\mu'_{3K}\tau_{3K}}{1-\rho_{3K}}
\end{align*}
\end{lem}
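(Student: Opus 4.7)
The strategy is to argue by contradiction, analyzing the identification step that produces $\Lambda^{k+k'}$ in the transformed ($\bvec{z}$) domain. Write $\bvec{a}^{k+k'-1} := \bvec{z}^{k+k'-1} - \mu\bvec{M}(\bvec{z}^{k+k'-1})\nabla w(\bvec{z}^{k+k'-1})$, which, by the change of variables used in Theorem~\ref{thm:equivalent-htp-dynamics}, is the $\bvec{\Psi}$-image of the vector being hard-thresholded in step~$1$ of RHTP; therefore $\Lambda^{k+k'}$ indexes the $K$ largest absolute entries of $\bvec{a}^{k+k'-1}$. Let $L_{p+q}$ denote the indices of the $p+q$ largest absolute entries of $\bvec{z}^\star$; since $p+q\le K$ and $\bvec{z}^\star$ has support $\Lambda$, $L_{p+q}\subseteq\Lambda$. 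Suppose, for contradiction, that $L_{p+q}\not\subseteq \Lambda^{k+k'}$, and pick $i\in L_{p+q}\setminus \Lambda^{k+k'}\subseteq \Lambda\setminus \Lambda^{k+k'}$. A pigeonhole count then produces a $j\in \Lambda^{k+k'}\setminus \Lambda$, for which $z^\star_j = 0$ and $\abs{a^{k+k'-1}_j}\ge \abs{a^{k+k'-1}_i}$.

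Next I would compare these two entries using the identity $\bvec{a}^{k+k'-1} - \bvec{z}^\star = \bvec{d}(\bvec{z}^\star,\bvec{z}^{k+k'-1}) + \mu\bvec{\Phi}^t\bvec{e}$ (which follows from $\nabla f(\bvec{x}^\star) = -\bvec{\Phi}^t\bvec{e}$ together with the definition of $\bvec{d}$ in Lemma~\ref{lem:inner-product-and-norm-inequality}). The reverse triangle inequality gives $\abs{a^{k+k'-1}_i}\ge r_{p+q}(\bvec{z}^\star) - \abs{(\bvec{a}^{k+k'-1}-\bvec{z}^\star)_i}$, while $\abs{a^{k+k'-1}_j} = \abs{(\bvec{a}^{k+k'-1}-\bvec{z}^\star)_j}$ because $z^\star_j=0$. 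Combining these with $\abs{a^{k+k'-1}_i}\le \abs{a^{k+k'-1}_j}$ and $A+B\le \sqrt{2}\sqrt{A^2+B^2}$ yields
\begin{align*}
r_{p+q}(\bvec{z}^\star) \le \sqrt{2}\opnorm{(\bvec{a}^{k+k'-1}-\bvec{z}^\star)_{\{i,j\}}}_2.
\end{align*}
Applying the norm inequality of Lemma~\ref{lem:inner-product-and-norm-inequality} with $T_3=\{i,j\}$ (so that $\abs{T}\le 2K+2\le 3K$), together with the RIP bound $\opnorm{(\bvec{\Phi}^t\bvec{e})_{\{i,j\}}}_2 \le \sqrt{1+\delta_{3K}}\opnorm{\bvec{e}}_2$, then gives
\begin{align*}
r_{p+q}(\bvec{z}^\star) \le \sqrt{2}\mu'_{3K}\opnorm{\bvec{z}^{k+k'-1}-\bvec{z}^\star}_2 + \sqrt{2(1+\delta_{3K})}\mu\opnorm{\bvec{e}}_2.
\end{align*}

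Finally, the $\rho_{3K}^{k'}$ factor in the claim is produced by iterating the contraction of Theorem~\ref{thm:convergence-rate-zn} for $k'-1$ steps to get $\opnorm{\bvec{z}^{k+k'-1}-\bvec{z}^\star}_2 \le \rho_{3K}^{k'-1}\opnorm{\bvec{z}^k-\bvec{z}^\star}_2 + \tau_{2K}\opnorm{\bvec{e}}_2/(1-\rho_{3K})$, and then bounding $\opnorm{\bvec{z}^k - \bvec{z}^\star}_2$ via the auxiliary inequality derived inside the proof of Theorem~\ref{thm:convergence-rate-zn}. Since $\nabla_{\Lambda^k}w(\bvec{z}^k) = \bvec{0}$ from the least-squares step at iteration $k$, that inequality applies verbatim to $\bvec{z}^k$, giving $\opnorm{\bvec{z}^k - \bvec{z}^\star}_2 \le (1-(\mu'_{2K})^2)^{-1/2}\opnorm{(\bvec{z}^\star)_{\Lambda\setminus \Lambda^k}}_2 + \sqrt{1+\delta_K}\opnorm{\bvec{e}}_2/(1-\delta_{2K})$, and the hypothesis $\Lambda^k\supseteq L_p$ forces $\opnorm{(\bvec{z}^\star)_{\Lambda\setminus \Lambda^k}}_2 \le \opnorm{r(\bvec{z}^\star)_{\{p+1,\dots,K\}}}_2$. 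Substituting into the previous display and using the key identity $\sqrt{2}\mu'_{3K}/\sqrt{1-(\mu'_{2K})^2} = \rho_{3K}$ collapses the signal-dependent coefficient to exactly $\rho_{3K}^{k'}$. The main obstacle is the resulting bookkeeping: the four distinct RIP-dependent noise contributions must be coarsened via $\delta_K\le \delta_{2K}\le \delta_{3K}$ and assembled into the single constant $\kappa_{3K}$ displayed in the lemma. Once this is done, the inequality contradicts the hypothesis, and hence $L_{p+q}\subseteq \Lambda^{k+k'}$.
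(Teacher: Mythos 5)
Your proposal is correct and follows essentially the same route as the paper's proof: the same decomposition of the thresholded vector into $\bvec{d}(\bvec{z}^\star,\bvec{z}^{k+k'-1})+\mu\bvec{\Phi}^t\bvec{e}$, the same application of the norm inequality of Lemma~\ref{lem:inner-product-and-norm-inequality} on a two-element set, the same $k'-1$-fold iteration of Theorem~\ref{thm:convergence-rate-zn}, the same use of the auxiliary least-squares inequality~\eqref{eq:rhtp-analysis-least-square-solution-step-main-inequality} together with $\Lambda^k\supseteq L_p$, and the same collapse via $\sqrt{2}\mu'_{3K}/\sqrt{1-(\mu'_{2K})^2}=\rho_{3K}$. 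The only cosmetic difference is that you phrase the comparison of entries as a contradiction/pigeonhole argument while the paper states it as a direct sufficient condition over $j\in\{1,\dots,p+q\}$ and $i\in\Lambda^C$; these are logically equivalent.
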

\begin{proof}
Let $\pi$ be the permutation of $\{1,2,\cdots,\ n\}$ such that $\abs{z^\star_{\pi(j)}} = r_j(\bvec{z}^\star)$ for all $j=1,2,\cdots,\ n$. We now require to find conditions that ensure $\pi(\{1,\cdots,p+q\})\subseteq \Lambda^{k+k'}$, given that $\pi(\{1,\cdots,p\})\subseteq \Lambda^{k}$. To ensure that $\pi(\{1,\cdots,p+q\})\subseteq \Lambda^{k+k'}$, it is enough to ensure  
%\begin{align*}
%\lefteqn{\min_{j\in \{1,\cdots,p+q\}}\abs{(\bvec{x}^n + \mu\bvec{\Phi}^t(\bvec{y} - \bvec{\Phi x}) - \gamma \nabla h(\bvec{x}^n))_{\pi(j)}}} & &\\
%\ & > \max_{i\in \Lambda^C}\abs{(\bvec{x}^n + \mu\bvec{\Phi}^t(\bvec{y} - \bvec{\Phi x}) - \gamma \nabla h(\bvec{x}^n))_{i}},
%\end{align*}
%or, equivalently, if
 \begin{align}
\lefteqn{\min_{j\in \{1,\cdots,p+q\}}\abs{(\bvec{z}^{k+k'-1} + \mu \bvec{\Phi}^t(\bvec{y}-\bvec{\Phi x}^{k+k'-1}))_{\pi(j)}}} & &\nonumber\\
\label{eq:no-iteration-sufficient-temp-condition1}
\ & > \max_{i\in \Lambda^C}\abs{(\bvec{z}^{k+k'-1} + \mu \bvec{\Phi}^t(\bvec{y}-\bvec{\Phi x}^{k+k'-1}))_{i}}.
\end{align}
Further proceeding along the lines of Bouchot \emph{et al}~\cite{bouchot2016hard} and using steps similar to those used to arrive at the inequality (13) therein, we find that the inequality~\eqref{eq:no-iteration-sufficient-temp-condition1} is ensured if, $\forall j\in \{1,2,\cdots,p+q\}$, and $i\in \Lambda^C$, the following is satisfied \begin{align}
r_{p+q}(\bvec{z}^\star) & >\abs{(\bvec{z}^{k+k'-1} - \bvec{z}^\star - \mu\bvec{\Phi}^t\bvec{\Phi}(\bvec{x}^{k+k'-1} - \bvec{x}^\star) + \mu\bvec{\Phi}^t\bvec{e})_{\pi(j)}}\nonumber\\
\label{eq:no-iteration-sufficient-temp-condition2}
\ & + \abs{(\bvec{z}^{k+k'-1} - \bvec{z}^\star - \mu\bvec{\Phi}^t\bvec{\Phi}(\bvec{x}^{k+k'-1} - \bvec{x}^\star) + \mu\bvec{\Phi}^t\bvec{e})_{i}}
\end{align} 
Now, the RHS of the inequality~\eqref{eq:no-iteration-sufficient-temp-condition2} can be upper bounded as below: \begin{align}
\lefteqn{\sqrt{2}\opnorm{(\bvec{z}^{k+k'-1} - \bvec{z}^\star - \mu\bvec{\Phi}^t\bvec{\Phi}(\bvec{x}^{k+k'-1} - \bvec{x}^\star) + \mu\bvec{\Phi}^t\bvec{e})_{\pi(j),\{i\}}}_2} & & \nonumber\\
\ & \le \sqrt{2}\opnorm{(\bvec{z}^{k+k'-1} - \bvec{z}^\star - \mu\bvec{\Phi}^t\bvec{\Phi}(\bvec{x}^{k+k'-1} - \bvec{x}^\star))_{\{\pi(j),i\}}}_2 \nonumber\\
\ & + \sqrt{2}\mu\opnorm{(\bvec{\Phi}^t\bvec{e})_{\{\pi(j),i\}}}_2\nonumber\\
\label{eq:no-iteration-sufficient-temp-condition3}
\ & \stackrel{(f) }{\le}\sqrt{2}\mu'_{2K+2}\opnorm{\bvec{z}^{k+k'-1}-\bvec{z}^\star}_2 + \sqrt{2}\sqrt{1+\delta_2}\opnorm{\bvec{e}}_2\nonumber\\
\ & \stackrel{(g)}{\le}\sqrt{2}\mu'_{3K}\left(\rho_{3K}^{k'-1}\opnorm{\bvec{z}^k - \bvec{z}^\star}_2 + \frac{\tau_{2K}\opnorm{\bvec{e}}_2}{1-\rho_{3K}}\right) + \sqrt{2(1+\delta_{2K})}\opnorm{\bvec{e}}_2
\end{align}
where step $(f)$ uses inequality~\eqref{eq:norm-inequality} of Lemma~\ref{lem:inner-product-and-norm-inequality} and step $(g)$ uses inequality~\eqref{eq:rhtp-convergence-inequality} of Theorem~\ref{thm:convergence-rate-zn} $k'-1$ times. Now using~\eqref{eq:rhtp-analysis-least-square-solution-step-main-inequality} and the assumption that $\pi(\{1,\cdots,p\})\subseteq \Lambda^k $, we find \begin{align*}
\lefteqn{\sqrt{2}\opnorm{(\bvec{z}^{k+k'-1} - \bvec{z}^\star - \mu\bvec{\Phi}^t\bvec{\Phi}(\bvec{x}^{k+k'-1} - \bvec{x}^\star) + \mu\bvec{\Phi}^t\bvec{e})_{\pi(j),\{i\}}}_2} & &\\
\ & \le \sqrt{2}\mu'_{3K}\rho_{3K}^{k'-1}\left(\frac{\opnorm{\bvec{z}^\star}_{(\Lambda^k)^C}}{\sqrt{1-(\mu'_{2K})^2}}+\frac{\sqrt{1+\delta_{K}}}{1-\delta_{2K}}\opnorm{\bvec{e}}_2\right) \\
\ & + \left(\frac{\sqrt{2}\mu'_{3K}\tau_{2K}}{1-\rho_{3K}} + \sqrt{2(1+\delta_{2K})}\right)\opnorm{\bvec{e}}_2\\
\ & \le \rho_{3K}^{k'}\opnorm{\bvec{z}^\star_{\pi(\{1,\cdots,p\})^C}}_2\\
\ & + \left(\frac{\sqrt{2(1+\delta_K)}\mu'_{3K}\rho_{3K}^{k'-1}}{1-\delta_{2K}} + \frac{\sqrt{2}\mu'_{3K}\tau_{2K}}{1-\rho_{3K}} + \sqrt{2(1+\delta_{2K})}\right)\opnorm{\bvec{e}}_2\\
\ & \le \rho_{3K}^{k'} \opnorm{r(\bvec{z}^\star)_{\{p+1,\cdots,K\}}}_2 + \kappa_{3K}\opnorm{\bvec{e}}_2.
\end{align*}
Thus,~\eqref{eq:no-iteration-sufficient-temp-condition2} is satisfied as soon as the condition in~\eqref{eq:no-of-iteration-lemma-inequality} is satisfied, which concludes the proof.
\end{proof}
We now present the main theorem on the number of iterations for convergence of HTP, which, unlike Theorem~\ref{thm:no-iteration-signal-dependent-estimate}, does not require the knowledge of the unknown signal. 
\begin{thm}
Under the condition $\frac{(1-\frac{1}{\sqrt{3}})(1-l)}{1-\delta_{3K}}<\mu<\frac{(1-L)^2}{(1-l)(1+\delta_{2K})}$, every sparse vector $\bvec{x}^\star\in \real^n$ is recovered using the measurement vector $\bvec{y}=\bvec{\Phi}\bvec{x}^\star$ in no more than $cK$ iterations, where $c=\frac{\ln(4/\rho_{3K}^2)}{\ln(1/\rho_{3K}^2)}$. 
\end{thm}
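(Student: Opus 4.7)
Since the measurements are noiseless, $\bvec{e}=\bvec{0}$ and the hypothesis of Lemma~\ref{lem:condition-on-x-finite-convergence} reduces to $r_{p+q}(\bvec{z}^\star) > \rho_{3K}^{k'}\opnorm{r(\bvec{z}^\star)_{\{p+1,\ldots,K\}}}_2$. My plan is, in the spirit of Bouchot \emph{et al.}, to build a stage decomposition $0=p_0<p_1<\cdots<p_S=K$ together with iteration counts $k'_s := k_s - k_{s-1}$ that maintain the invariant ``$\Lambda^{k_s}$ contains the indices of the top $p_s$ absolute entries of $\bvec{z}^\star$'', and then argue $\sum_s k'_s \le cK$. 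Once $p_S = K$, the correct support is identified; the least-squares step of the algorithm then recovers $\bvec{x}^\star$ exactly (since $\bvec{\Phi}_\Lambda$ has full column rank by the RIP) and the iteration stabilizes, so the algorithm converges within $k_S$ iterations.

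For the stage construction, set $R_{s-1}^2 := \opnorm{r(\bvec{z}^\star)_{\{p_{s-1}+1,\ldots,K\}}}_2^2$ and choose $q_s$ as the smallest integer in $\{1,\ldots,K-p_{s-1}\}$ such that $r_{p_{s-1}+q_s}(\bvec{z}^\star)^2 \ge R_{s-1}^2(\rho_{3K}^2/4)^{q_s}$. Existence of such a $q_s$ is forced by a geometric-series argument: if the reverse inequality held for every $q\in\{1,\ldots,K-p_{s-1}\}$, summing would yield
\[
R_{s-1}^2 = \sum_{q=1}^{K-p_{s-1}} r_{p_{s-1}+q}^2 < R_{s-1}^2 \sum_{q\ge 1}(\rho_{3K}^2/4)^q = R_{s-1}^2\,\frac{\rho_{3K}^2}{4-\rho_{3K}^2},
\]
which forces $\rho_{3K}^2 > 2$ and contradicts $\rho_{3K} < 1$ (guaranteed by the lower bound on $\mu$ identified after Theorem~\ref{thm:convergence-rate-zn}). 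The crucial algebraic identity is $(\rho_{3K}^2/4)^{q_s} = \rho_{3K}^{2cq_s}$, which comes directly from the definition of $c$; combined with the choice of $q_s$ this yields $r_{p_{s-1}+q_s} \ge \rho_{3K}^{cq_s}R_{s-1}$, so setting $k'_s = \lceil cq_s\rceil$ iterations is enough to invoke Lemma~\ref{lem:condition-on-x-finite-convergence} and advance the invariant to $p_s = p_{s-1}+q_s$.

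Summing the stage costs then gives $\sum_{s=1}^S k'_s \le c\sum_{s=1}^S q_s = cK$, the advertised bound. The main obstacle is picking the combinatorial threshold on $q_s$ so that both (i) existence is forced by the geometric series $\sum_{q\ge1}(\rho_{3K}^2/4)^q$ lying below $1$, and (ii) the per-stage cost cleanly converts into $cq_s$ through the identity $(\rho_{3K}^2/4)^{q_s} = \rho_{3K}^{2cq_s}$; both steps are delicate because they pin down why the specific constant $c=\ln(4/\rho_{3K}^2)/\ln(1/\rho_{3K}^2)$ appears. A secondary technical point is absorbing the ceiling in $k'_s$ and the strict-versus-nonstrict form of the inequality in Lemma~\ref{lem:condition-on-x-finite-convergence}; this can be handled by exploiting the slack in the defining inequality for $q_s$ (which is strict whenever the ceiling rounds up), so that $k'_s = \lceil cq_s\rceil$ remains admissible without breaking the amortization.
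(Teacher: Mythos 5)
Your overall strategy---the stage decomposition driven by Lemma~\ref{lem:condition-on-x-finite-convergence} and the identity $\rho_{3K}^{2c}=\rho_{3K}^2/4$---is exactly the route the paper takes, since the paper's proof is a one-line deferral to the proof of Theorem~5 of Bouchot \emph{et al.} with $\bvec{x}$ and $\rho$ replaced by $\bvec{z}^\star$ and $\rho_{3K}$. However, your amortization step contains a genuine error. With your threshold, $q_s$ is the smallest integer satisfying $r_{p_{s-1}+q_s}(\bvec{z}^\star)^2\ge R_{s-1}^2(\rho_{3K}^2/4)^{q_s}$, and the per-stage iteration count must be an integer, so $k'_s=\lceil cq_s\rceil\ge cq_s$. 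Hence $\sum_s k'_s\ge c\sum_s q_s=cK$: your final inequality points the wrong way, and in general the total exceeds $cK$ by up to the number of stages. For instance, $\rho_{3K}=1/4$ gives $c=3/2$, and two stages with $q_1=q_2=1$ cost $\lceil 3/2\rceil+\lceil 3/2\rceil=4>3=cK$. The ``slack'' you invoke to absorb the ceiling does not exist: the defining inequality for $q_s$ provides no quantitative margin at $q=q_s$ itself, only at $q<q_s$.

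The fix---and this is what the cited proof actually does---is to strengthen the stage threshold by a factor $1/\rho_{3K}^2$: take $q_s$ to be the smallest $q$ with $r_{p_{s-1}+q}(\bvec{z}^\star)^2\ge\rho_{3K}^{2(q-1)}4^{-q}R_{s-1}^2$. Existence still follows from a geometric series, now $\sum_{q\ge1}\rho_{3K}^{2(q-1)}4^{-q}=1/(4-\rho_{3K}^2)<1$, so no contradiction with $\rho_{3K}<1$ is even needed beyond $\rho_{3K}^2<3$. With this threshold, $R_{s-1}^2/r_{p_s}^2\le(4/\rho_{3K}^2)^{q_s}\rho_{3K}^2$, so the smallest integer $k'_s$ with $\rho_{3K}^{2k'_s}R_{s-1}^2<r_{p_s}^2$ (note the strict inequality, as required by Lemma~\ref{lem:condition-on-x-finite-convergence}) satisfies $k'_s\le\frac{\ln(R_{s-1}^2/r_{p_s}^2)}{\ln(1/\rho_{3K}^2)}+1\le cq_s-1+1=cq_s$; the extra factor $1/\rho_{3K}^2$ exactly absorbs both the ceiling and the strictness requirement. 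Summing then legitimately yields $\sum_s k'_s\le cK$. A separate minor point: your existence argument needs $R_{s-1}>0$; if $\bvec{x}^\star$ has fewer than $K$ nonzero entries the induction should stop once all nonzero entries are captured, after which the least-squares step already returns $\bvec{x}^\star$.
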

\begin{proof}
The proof of this theorem is exactly the same as the proof of Theorem 5 of~\cite{bouchot2016hard} with just replacing $\bvec{x}$ and $\rho$ in the theorem by $\bvec{z}^\star$ and $\rho_{3K}$ respectively. 
\end{proof}
\section{Numerical experiments}
\vspace{-1mm}
\label{sec:numerical-experiments}
In the simulation experiments, the unknown vector has dimensions $n=512,\ m=256$. The entries of the unknown signal corresponding to its support are generated independently according to $\mathcal{N}(0,1)$. The measurements are assumed to be noiseless i.e. we assume $\bvec{y}=\bvec{\Phi x}^\star$. The entries of the measurement matrix $\bvec{\Phi}$ are generated according to i.i.d.~$\mathcal{N}(0,m^{-1})$, and the entries of $\bvec{x}^\star$ are generated according to i.i.d.~$\mathcal{N}(0,1)$. The regularizer  is taken as $\sum_{j=1}^N \gamma_jg(x_j)$, where $g(x) =(x^2+\epsilon^2)^{q/2},\ \gamma_j=\gamma,\ \forall j$. For $q>1$, this regularizer puts a constraint on the energy of entries in the support of the estimated vector. For $q\le 1$, the regularizer promotes sparsity among the entries of the support of the estimated vector, so that, intuitively, estimation of the entries with large magnitude of the support of $\bvec{x}^\star$ will get more preference than the entries with very small magnitude. We fix $\epsilon = 1.4\gamma,\ \gamma=0.3$. The values of $q$ are chosen from the set $\{0.5,1,1.5,2\}$, and we take $\mu=0.3$. For the simulation, $100$ ensembles of randomly generated measurement matrices and random unknown signals are generated. The HTP and RHTP algorithms are run for $100$ iterations using each of these ensembles. We say that an algorithm successfully recovers the true sparse signal if the norm of the error between the estimate produced at an iteration of the algorithm and the actual signal is less than $10^{-6}$. If this happens at an iteration $N_{\mathrm{it}}$, the algorithm stops at that iteration and we say that the algorithm has taken $N_{\mathrm{it}}$ iterations to perfectly recover the unknown sparse vector.
\begin{figure}[t!]
	\centering
	\includegraphics[height=1.5in,width=3in]{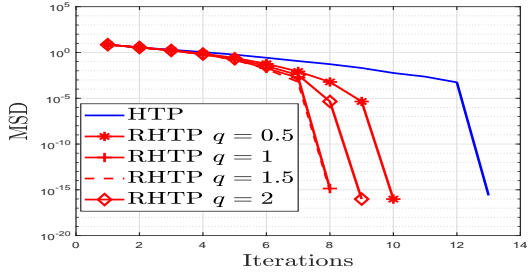}
	\caption{MSD vs iterations for RHTP and HTP, $m=256,\ K=51$}
	\label{fig:msd-vs-iterations-rhtp-htp}
\end{figure}
\begin{figure}[t!]
	\centering
	\begin{subfigure}{0.5\textwidth}
		\centering
		\includegraphics[height=1.5in, width=3in]{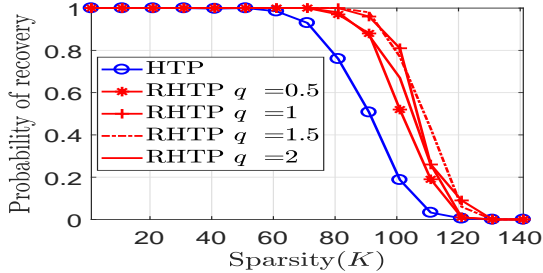}
		\caption{}
		\label{fig:prob-vs-sparsity}
	\end{subfigure}
	\begin{subfigure}{0.5\textwidth}
		\centering
		\includegraphics[height=1.5in, width=3in]{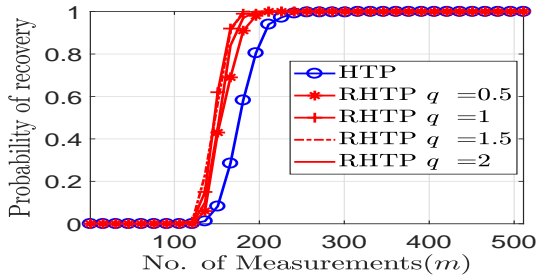}
		\caption{}
		\label{fig:prob-vs-no-measurements}
	\end{subfigure}
	\caption{Probability of recovery vs sparsity in Fig.~\ref{fig:prob-vs-sparsity} ($m=256$) and vs no. of measurements in Fig.~\ref{fig:prob-vs-no-measurements} ($K=51$).}
	\label{fig:prob-recovery-rhtp-vs-htp}
\end{figure}
\begin{figure}[t!]
	\centering
	\begin{subfigure}{0.5\textwidth}
		\centering
		\includegraphics[height=1.5in, width=3in]{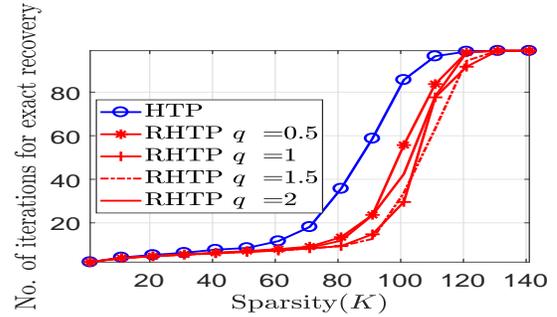}
		\caption{}
		\label{fig:it-vs-sparsity}
	\end{subfigure}
	\begin{subfigure}{0.5\textwidth}
		\centering
		\includegraphics[height=1.5in, width=3in]{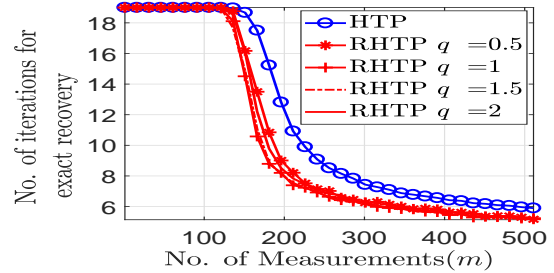}
		\caption{}
		\label{fig:it-vs-no-measurements}
	\end{subfigure}
	\caption{Average number of iteration for perfect recovery vs sparsity in Fig.~\ref{fig:it-vs-sparsity} ($m=256$) and vs no. of measurements in Fig.~\ref{fig:it-vs-no-measurements} ($K=51$).}
	\label{fig:no-iteration-no-rhtp-vs-htp}
\end{figure}

In Fig.~\ref{fig:msd-vs-iterations-rhtp-htp} the decay of the iterates of RHTP, for different $q$, as well as HTP are plotted against iteration number. Clearly, RHTP has faster convergence than HTP, with the fastest convergence seen for $q=1,1.5$.
In Fig.~\ref{fig:prob-recovery-rhtp-vs-htp}, the probability of recovery of HTP and RHTP for various values of $q$ are plotted against both sparsity $K$ (in Fig.~\ref{fig:prob-vs-sparsity}) and number of measurements $m$ (in Fig.~\ref{fig:prob-vs-no-measurements}). Clearly, for all $q$, RHTP has better probability of recovery performance. In Fig.~\ref{fig:no-iteration-no-rhtp-vs-htp}, the number of iterations required for convergence for both RHTP and HTP are plotted against $K$ (Fig.~\ref{fig:it-vs-sparsity}) and $m$ (Fig.~\ref{fig:it-vs-no-measurements}). Again, in all these plots the number of iterations required by RHTP to converge can be seen to smaller than that of HTP, with the best performance shown by $q=1,1.5$.
\appendices
\section{Proof of Lemmas}
\label{sec:appendix-proof-lemmas}
%
%\subsection{Proof of Lemma~\ref{lem:unique-minimizer-property} }
%%
%\label{sec:appendix-proof-lemma-unique-minimizer-property}
%%
%Let $\bvec{x}_1$ be a minimizer of the problem $\min_{\bvec{z}:\support(\bvec{z})\subseteq S}f(\bvec{z})$. Consequently, we have $\nabla_{S}f(\bvec{x}_1) = \bvec{0}_{S}$. Now, let $\bvec{x}_2$ be another minimizer. Since $f$ is $m_{K}$-strong convex, and since by definition, $\opnorm{\bvec{x}_2 - \bvec{x}_1}_0\le K$, we have, \begin{align*}
%\frac{m_K}{2}\opnorm{\bvec{x}_2 - \bvec{x}_1}_2^2  & \le f(\bvec{x}_2) - f(\bvec{x}_1) - \inprod{\nabla f(\bvec{x}_1)}{\bvec{x}_2-\bvec{x}_1} \\
%\implies \frac{m_K}{2}\opnorm{\bvec{x}_2 - \bvec{x}_1}_2^2  & \stackrel{(d)}{\le} 0,
%\end{align*}
%which implies that $\bvec{x}_2 = \bvec{x}_1$. Here in step $(d)$ we have used the fact that $\bvec{x}_2$ and $\bvec{x}_1$ have the same support $S$ and that $\nabla_Sf(\bvec{x}) = \bvec{0}_S$, and we have used the fact that $f(\bvec{x}_2) \le f(\bvec{x}_1) \le f(\bvec{x}_2)\implies f(\bvec{x}_2) = f(\bvec{x}_1)$.
\subsection{Proof of Lemma~\ref{lem:norm-of-projection-upper-bound} }
\label{sec:appendix-proof-of-lemma-norm-of-projection-upper-bound}
First note that $\opnorm{\proj{S}\bvec{y}}_2^2 = \inprod{\proj{S}\bvec{y}}{\proj{S}\bvec{y}} = \inprod{\proj{S}\bvec{y}}{\bvec{y}}=\inprod{\bvec{\Phi}_{S}\bvec{u}_{S}}{\bvec{\Phi}_\Lambda\bvec{x}_\Lambda}=\bvec{u}_{T}^t\bvec{\Phi}^t_T\bvec{\Phi}_T\bvec{x}_T$, where $T = S\cup \Lambda$, and $\bvec{u}$ is a vector with support $S$, and $\bvec{u}_S$ is the projection coefficients after projecting $\bvec{y}$ to the span of $\bvec{\Phi}_{S}$. Now, as the matrix $\bvec{\Phi}$ satisfies RIP of order $\abs{S\cup \Lambda}$, the matrix $\bvec{\Phi}^t_T\bvec{\Phi}_T$ is positive definite with the maximum and minimum eigenvalues (let $\lambda_M$ and $\lambda_m$ respectively) bounded as $0<1-\delta_{\abs{T}}<\lambda_m\le \lambda_M < 1 + \delta_{\abs{T}}$. Now, $\inprod{\bvec{u}}{\bvec{x}}=0$ since $S\cap \Lambda=\emptyset$. Consequently, applying Wielandt's Lemma~\ref{lem:wielandt-theorem}, we arrive at \begin{align*}
\opnorm{\proj{S}\bvec{y}}_2^2 & \le \delta_{\abs{T}}\opnorm{\bvec{\Phi}_S\bvec{u}_S}_2\opnorm{\bvec{\Phi}_\Lambda\bvec{x}_\Lambda}_2\\
\ & = \delta_{\abs{T}}\opnorm{\proj{S}\bvec{y}}_2\opnorm{\bvec{\Phi}_\Lambda\bvec{x}_\Lambda}_2,
\end{align*} 
which subsequently produces the desired result after canceling $\opnorm{\proj{S}\bvec{y}}$ once from both sides of the above inequality.
\subsection{Proof of Lemma~\ref{lem:inner-product-of-projections-upper-bound} }
\label{sec:appendix-proof-of-lemma-inner-product-of-projections-upper-bound}
We begin the proof by writing $\bvec{\phi}_i^t\dualproj{S}\bvec{y}$ as $\bvec{\phi}_i^t\dualproj{S}\bvec{\Phi}_{\Lambda\setminus S}\bvec{x}_{\Lambda\setminus S}=\inprod{\dualproj{S}\bvec{\phi}_i}{\bvec{\Phi}_{\Lambda\setminus S}\bvec{x}_{\Lambda\setminus S}}=\inprod{\bvec{\Phi}_{S\cup \{i\}}\bvec{u}_{S\cup \{i\}}}{\bvec{\Phi}_{\Lambda\setminus S}\bvec{x}_{\Lambda\setminus S}} = \inprod{\bvec{\Phi}_{S\cup \{i\}\cup \Lambda} \bvec{u}}{\bvec{\Phi}_{S\cup \{i\}\cup \Lambda} \bvec{v}}$, where $\bvec{u},\ \bvec{v}\in \real^{\abs{S\cup \{i\}\cup \Lambda}}$, such that \begin{align*}
u_j & = \left\{\begin{array}{ll}
0, & \mbox{if}\ j\in (S\cup \{i\})^C\\
1, & \mbox{if}\ j=i\\
-w_j, & \mbox{if}\ j\in S,
\end{array}\right.
\end{align*} and  \begin{align*}
v_j & =\left\{\begin{array}{ll}
0, & \mbox{if}\ j\in (\Lambda\setminus S)^C\\
x_j, & \mbox{if}\ j \in \Lambda\setminus S.
\end{array}
\right.
\end{align*} Here $\bvec{w}_S$ is the vector such that $\bvec{\Phi}_S\bvec{w}_S = \proj{S}\bvec{\phi}_i$. Such a vector $\bvec{w}_S$ is unique since the matrix $\bvec{\Phi}_S$ satisfies RIP of order $\abs{S}$ which is a consequence of the assumption that the matrix $\bvec{\Phi}$ satisfies RIP of order $\abs{S} + \abs{\Lambda} +1$. Now, we can apply Wielandt's Lemma~\ref{lem:wielandt-theorem} along with the RIP to obtain $\abs{\bvec{\phi}_i^t\dualproj{S}\bvec{y}}\le \delta_{\abs{S}+\abs{\Lambda}+1}\opnorm{\bvec{\Phi}_{S\cup \{i\}\cup \Lambda} \bvec{u}}_2\opnorm{\bvec{\Phi}_{S\cup \{i\}\cup \Lambda} \bvec{v}}_2 = \delta_{\abs{S}+\abs{\Lambda}+1}\opnorm{\dualproj{S}\bvec{\phi}_i}_2\opnorm{\bvec{\Phi}_{\Lambda\setminus S}\bvec{x}_{\Lambda\setminus S}}_2 = \delta_{\abs{S}+\abs{\Lambda}+1}\opnorm{\dualproj{S}\bvec{\phi}_i}_2\opnorm{\bvec{y}}_2$ since $\Lambda\cap S=\emptyset$.
\subsection{Proof of Lemma~\ref{lem:homeomorphism-Psi} }
\label{sec:appendix-proof-lemma-homemorphism-Psi}
Since for each $i=1,2,\cdots,\ n$, the $i^{\mathrm{th}}$ coordinate of $\bvec{\Psi}(\bvec{x})$ is only a function of $x_i$ (namely $\psi(x_i)$), to prove that $\bvec{\Psi}(\bvec{x})$ is a homeomorphism it is enough to prove that, for all $j=1,2,\cdots,n$, $\psi_j:\real\to \real$ is bijective, and that both $\psi_j$ and $\psi_j^{-1}$ are continuous. The continuity of $\psi_j$ follows from the observation that $\psi_j(x) = x - \gamma_j g_j'(x)$ and that $g_j'$ is continuous since $g_j\in \mathcal{C}^2$. Also, as $\gamma_j g_j''(x) < 1\ \forall x\in \real$, $\psi_j$ is strictly increasing, which makes it bijective with a continuous inverse.
\subsection{Proof of Lemma~\ref{lem:jacobians-of-psi-and-psi-inverse} }
\label{sec:appendix-proof-lemm-jacobians-of-psi-and-psi-inverse}
The proof follows by observing that $[\bvec{D\Psi}(\bvec{x})]_{ij} = \frac{\partial \psi_i(x_i)}{\partial x_j} = \psi_i'(x_i)\delta_{ij}$ and similarly, $[\bvec{D\Psi}^{-1}(\bvec{x})]_{ij} = \frac{\partial \psi_i^{-1}(x_i)}{\partial x_j} = \frac{1}{\psi_i'(\psi_i^{-1}(x_i))}\delta_{ij}$, and subsequently recalling the definitions of $\bvec{M}(\bvec{x})$ and $\bvec{M}^{-1}(\bvec{x})$ in Section~\ref{sec:theoretical-results}.
\subsection{Proof of Lemma~\ref{lem:bounds-on-iterates-zn-xhatn} }
\label{sec:appendix-proof-lemma-bounds-on-iterates-zn-xhatn}
First note that $\forall k\ge 1$, $\bvec{x}^k = \bvec{\Phi}_{\Lambda^k}^\dagger \bvec{y}$. Thus, $\proj{\Lambda^k}\bvec{y} = \bvec{\Phi}_{\Lambda^k}\bvec{x}^k$. Thus, using RIP and the fact that $\opnorm{\proj{\Lambda^k}\bvec{y}}_2\le \opnorm{\bvec{y}}_2$, we have  $\opnorm{\bvec{x}^k}_2\le \frac{\opnorm{\bvec{y}}_2}{\sqrt{1-\delta_K}}$. However, when $\Lambda^{k}\cap \Lambda= \emptyset$, for the noiseless case, $\bvec{y} = \bvec{\Phi}_\Lambda\bvec{x}_\Lambda$. Thus, using Lemma~\ref{lem:norm-of-projection-upper-bound} and using RIP one obtains, $\opnorm{\bvec{x}^k}_2\le \frac{\delta_{2K}\opnorm{\bvec{y}}_2}{\sqrt{1-\delta_K}}$. Thus, for noiseless case we write $\opnorm{\bvec{x}^k}_2\le B_n$.

 Now, to find an upper bound on $\opnorm{\bvec{z}^k}_2$, we observe that $\bvec{z}^k = \bvec{\Psi}(\bvec{x}^k) = \bvec{D\Psi}(\theta\bvec{x}^k)\bvec{x}^k$ for some $\theta\in [0,1]$, where we have used the mean value theorem and the fact that $\bvec{\Psi}(\bvec{0}) = \bvec{0}$. Using Lemma~\ref{lem:jacobians-of-psi-and-psi-inverse} we then deduce that $\bvec{z}^k = \bvec{M}(\bvec{\Psi}(\theta\bvec{x}^k))\bvec{x}^k\implies z_i^k = [\bvec{M}(\bvec{\Psi}(\theta\bvec{x}^k))]_{ii} x_i^k$. Hence, $\abs{z_i^k} = [\bvec{M}(\bvec{\Psi}(\theta\bvec{x}^k))]_{ii}\abs{x_i^k} \le  (1-\gamma_i g_i''(\theta x_i^k))B_k\le C_{k,i}B_k$, since $1-\gamma_i g_i''(\theta x_i^k)\le 1-\gamma_i \min_{u\in [-B_k, B_k]}g_i''(u)=: C_{k,i}$ as $\forall i=1,2,\cdots,\ n$, $\abs{\theta x_i^k}\le \opnorm{\bvec{x}^k}_2\le B_k$. 

To find a bound on $\hat{x}_i^{k+1}$, we first observe that \begin{align*}
\hat{x}_i^{k+1} & =\left\{
\begin{array}{ll}
{z}_i^k, & \mbox{if}\ i\in \Lambda^{k+1}\cup \Lambda^k\\
-\mu\partial_i f(\bvec{x}^k), & \mbox{if}\ i\in \Lambda^{k+1}\setminus \Lambda^k\\
0, & \mbox{else},
\end{array}
\right.
\end{align*}
which is a consequence of the fact that $\nabla_{\Lambda^k}f(\bvec{x}^k) = \bvec{0}_{\Lambda^k}$.
Thus, if $i\in \Lambda^{k+1}\cap \Lambda^k$, $\abs{\hat{x}^{k+1}_i}\le B_kC_{k,i}$. On the other hand, if $i\in \Lambda^{k+1}\setminus \Lambda^k$, $\abs{\hat{x}^{k+1}_i}=\mu\abs{\partial_i f(\bvec{x}^k)} = \mu\abs{\bvec{\phi}_i^t(\bvec{y} - \bvec{\Phi x}^k)}$. Since, $\bvec{\Phi x}^k = \proj{\Lambda^k}\bvec{y}$, we have $\bvec{\phi}_i^t(\bvec{y} - \bvec{\Phi x}^k) = \bvec{\phi}_i^t\dualproj{\Lambda^k}\bvec{y}$. Using Cauchy-Schwartz inequality it then trivially follows that $\abs{\bvec{\phi}_i^t(\bvec{y} - \bvec{\Phi x}^k)}\le \max_i\opnorm{\bvec{\phi}_i}_2\opnorm{\bvec{y}}_2.$ However, when $\Lambda^{k+1}\cap \Lambda=\emptyset$, for the noiseless case, Lemma~\ref{lem:inner-product-of-projections-upper-bound} allows to deduce the following upper bound: $\abs{\bvec{\phi}_i^t(\bvec{y} - \bvec{\Phi x}^k)}\le \delta_{2K+1}\max_i\opnorm{\bvec{\phi}_i}_2\opnorm{\bvec{y}}_2$.
\subsection{Proof of Lemma~\ref{lem:inner-product-and-norm-inequality} }
\label{sec:appendix-proof-inner-product-and-norm-inequality}
First note that $\inprod{\bvec{w}}{\bvec{d}(\bvec{u},\bvec{v})} = \inprod{\bvec{w}_{T}}{\bvec{B}(\bvec{v} - \bvec{u})_{T}}$, where $\bvec{B} = \left(\bvec{I}_{\abs{T}} - \rho \bvec{C}\right),$  with $\bvec{C} = \bvec{\Phi}_{T}^t\bvec{\Phi}_{T}\bvec{P}^{-1}$, where $\bvec{P}^{-1} = \int_0^1 \bvec{M}_{T}^{-1}(\bvec{u} + \theta (\bvec{v} - \bvec{u})d\theta $, which follows from the following observation: \begin{align*}
\lefteqn{\bvec{\Psi}^{-1}(\bvec{v}) - \bvec{\Psi}^{-1}(\bvec{u})} & &\\
\ & = \int_0^1 \bvec{D\Psi}^{-1}(\bvec{u} + \theta (\bvec{v} - \bvec{u}))(\bvec{v} - \bvec{u})d\theta\\
\ & \stackrel{\mbox{\footnotesize{Lemma~\ref{lem:jacobians-of-psi-and-psi-inverse} }}}{=} \int_0^1 \bvec{M}^{-1}(\bvec{u} + \theta (\bvec{v} - \bvec{u}))(\bvec{v} - \bvec{u})d\theta.
\end{align*}
Thus, by Cauchy-Schwarz, $\inprod{\bvec{w}}{\bvec{d}(\bvec{u},\bvec{v})} \le \lambda_{\max}\left(\bvec{B}\right)\opnorm{\bvec{w}_{T}}_2\opnorm{(\bvec{v} - \bvec{u})_{T}}_2$, where $\lambda_{\max}(\bvec{B}) = \max\left\{\lambda_1,\ \lambda_2\right\}$, where, $\lambda_1 = \rho\lambda_{\max}\left(\bvec{C}\right) - 1$, and $\lambda_2 = 1 - \rho\lambda_{\min}\left(\bvec{C}\right)$. Now, we claim that under the assumption $\mu(1+\delta_{2K}) < (1-L)^2/(1 - l)$, $\lambda_2 > \lambda_1$. To prove this assume that $\lambda_2 \le \lambda_1$. Then, $2\rho\lambda_{\max}\left(\bvec{C}\right) \ge \rho\lambda_{\max}\left(\bvec{C}\right) + \rho\lambda_{\min}\left(\bvec{C}\right) \ge 2 \implies \rho\lambda_{\max}(\bvec{C})\ge 1 $. Now, observe that using Jensen's inequality, we get $\lambda_{\max}(\bvec{C})\le (1+\delta_{\abs{T}})\lambda_{\max}\left(\int_0^1 \bvec{M}_{T}^{-1}(\bvec{u} + \theta (\bvec{v} - \bvec{u})d\theta ) \right)\le (1+\delta_{\abs{T}}) \int_0^1 \lambda_{\max}\left(\bvec{M}_{T}^{-1}(\bvec{u} + \theta (\bvec{v} - \bvec{u})\right)d\theta\le (1+\delta_{\abs{T}})\int_0^1\frac{1}{1-\max_{i\in T}\gamma_i g_i''(\psi^{-1}(u_i + \theta(v_i - u_i)) )}d\theta\le (1+\delta_{\abs{T}})\int_0^1 \frac{1}{1-L}d\theta = \frac{1+\delta_{\abs{T}}}{1-L}$. Thus, $\rho \lambda_{\max}(\bvec{C})\ge 1\implies \frac{\rho(1+\delta_{\abs{T}})}{1-L} >1$, which implies $1 < \frac{1-L}{1-l}\implies L<l$ which is a contradiction. Thus, $\lambda_{\max}(\bvec{B}) = 1 - \rho\lambda_{\min}(\bvec{C})$. Now, it is easy to see that $\lambda_{\min}(\bvec{C}) = \lambda_{\min}\left(\bvec{\Phi}_{T}^t\bvec{\Phi}_{T}\bvec{P}^{-1}_{T}\right) = \lambda_{\min}\left(\bvec{P}^{-1/2} \bvec{\Phi}_{T}^t\bvec{\Phi}_{T} \bvec{P}^{-1/2}\right) \ge \lambda_{\min}(\bvec{\Phi}^t_{T}\bvec{\Phi}_{T})\lambda_{\min}(\bvec{P}^{-1})$. Moreover, using Jensen's inequality we get $\lambda_{\min}(\bvec{P}^{-1}) = \lambda_{\min}\left(\int_0^1\bvec{M}^{-1}_{T}(\bvec{u} + \theta (\bvec{v} - \bvec{u}))d\theta\right)\ge \int_0^1 \lambda_{\min}\left(\bvec{M}^{-1}_{T}(\bvec{u} + \theta (\bvec{v} - \bvec{u}))\right) d\theta\ge \int_0^1 \frac{1}{1-\min_{i\in T}\gamma_i g_i''(\psi^{-1}(u_i + \theta(v_i - u_i)))} d\theta \ge \frac{1}{1 -  \min_{u\in [\psi^{-1}(-E),\ \psi^{-1}(E)]}\gamma_i g_i''(u)}$ since $\abs{u_i+\theta(v_i - u_i)}\le (1-\theta)\abs{u}_i + \theta \abs{v_i}\le E$. Hence, $\lambda_{\max}(\bvec{B})\le 1 - \frac{\rho(1-\delta_{\abs{T}})}{1-l} = \rho'_{\abs{T}}$, and the inequality~\eqref{eq:inner-product-inequality} follows from an application of Cauchy-Schwarz inequality.

 On the other hand, if $\bvec{a} = \begin{bmatrix}
(\bvec{d}(\bvec{u},\bvec{v}))_{T_3}\\
\bvec{0}_{T_3}^C
\end{bmatrix} $, then, from the first part of this lemma we get \begin{align*}
\opnorm{\bvec{a}}_2^2 & = \inprod{\bvec{a}}{\bvec{d}(\bvec{u},\bvec{v})}\\
\ & \le \rho'_{\abs{T}}\opnorm{\bvec{a}}_2\opnorm{\bvec{v} - \bvec{u}}_2\\
\implies \opnorm{\bvec{a}}_2 & \le \rho'_{\abs{T}}\opnorm{\bvec{v} - \bvec{u}}_2.
\end{align*}

To check that $\rho'_{\abs{T}}\in (0,1)$, note that the condition $ \sup_{u\in \real}\gamma_i g_i''(u)>0$  for all $i=1,2,\cdots,\ L$, makes $\rho'_{\abs{T}}$ less than unity, and the condition $\rho <\frac{(1-L)^2}{(1+\delta_{\abs{T}})(1-l)}$ ensures that  $\rho'_{\abs{T}} = 1 - \frac{\mu(1-\delta_{\abs{T}})}{1-l} > 1 - \frac{(1-\delta_{\abs{T}})}{1+\delta_{\abs{T}}}\left(\frac{1-L}{1-l}\right)^2 > 0$.
%
%\subsection{Proof of Lemma~\ref{lem:relation-between-gradient-f-gradient-w} }
%%
%\label{sec:appendix-proof-lemma-relation-between-gradient-f-gradient-w }
%%
%Since $f(\bvec{x}) = w(\bvec{\Psi}(\bvec{x}))$, we have $\partial_if(\bvec{x}) = \sum_{j=1}^N \partial_j  w(\bvec{\Psi}(\bvec{x})) \frac{\partial(\bvec{\Psi}(\bvec{x}))_i}{\partial x_j} = \sum_{j=1}^N \partial_j  w(\bvec{\Psi}(\bvec{x})) \frac{\partial(\psi(x_i)}{\partial x_j} = \partial_i  w(\bvec{\Psi}(\bvec{x})) \psi'(x_i)$. Thus, $\nabla f(\bvec{x}) = \bvec{M}(\bvec{\Psi}(\bvec{x}))\nabla w\left(\bvec{\Psi}(\bvec{x})\right)$.
%%
\bibliography{mixed-function-htp}
\end{document}